\newtheorem{theorem}{Theorem}
\newtheorem{lemma}[theorem]{Lemma}
\newtheorem{proposition}[theorem]{Proposition}
\newtheorem{definition}{Definition}
\newtheorem{corollary}[theorem]{Corollary}
\newtheorem{remark}{Remark}
\begin{document}

\bstctlcite{}
\title{Data-Driven Model Order Reduction for T-Product-Based Dynamical Systems}
\author{Shenghan Mei, Ziqin He, Yidan Mei, Xin Mao, Anqi Dong, \\ Ren Wang, \textit{Member}, \textit{IEEE}, and Can Chen, \textit{Member}, \textit{IEEE}
\thanks{Shenghan Mei and Ziqin He contributed equally to this work.}
\thanks{Shenghan Mei, Ziqin He, and Yidan Mei are with the Department of Mathematics, University of North Carolina at Chapel Hill, Chapel Hill, NC 27599, USA (emails: zhe21@unc.edu; ymei@unc.edu; shmei@unc.edu).}
\thanks{Xin Mao is with the School of Data Science and Society, University of North Carolina at Chapel Hill, Chapel Hill, NC 27599, USA (email: xinm@unc.edu).}
\thanks{Anqi Dong is with the Division of Decision and Control Systems and the Department of Mathematics, KTH Royal Institute of Technology, SE-100 44 Stockholm, Sweden (email: anqid@kth.se).}
\thanks{Ren Wang is with the Department of Electrical and Computer Engineering, Illinois Institute of Technology, Chicago, IL 60616, USA (e-mail:
rwang74@iit.edu).}
\thanks{Can Chen is with the School of Data Science and Society, the Department of Mathematics, and the Department of Biostatistics,  University of North Carolina at Chapel Hill, Chapel Hill, NC 27599, USA (email: canc@unc.edu).}%
}

\maketitle
\thispagestyle{empty}
\begin{abstract}
Model order reduction plays a crucial role in simplifying complex systems while preserving their essential dynamic characteristics, making it an invaluable tool in a wide range of applications, including robotic systems, signal processing, and fluid dynamics. However, traditional model order reduction techniques like balanced truncation are not designed to handle tensor data directly and instead require unfolding the data, which may lead to the loss of important higher-order structural information. In this article, we introduce a novel framework for data-driven model order reduction of T-product-based dynamical systems (TPDSs), which are often used to capture the evolution of third-order tensor data such as images and videos through the T-product. Specifically, we develop advanced T-product-based techniques, including T-balanced truncation, T-balanced proper orthogonal decomposition, and the T-eigensystem realization algorithm for input-output TPDSs by leveraging the unique properties of T-singular value decomposition. We demonstrate that these techniques offer significant memory and computational savings while achieving reduction errors that are  comparable to those of conventional methods. The effectiveness of the proposed framework is further validated through synthetic and real-world examples.
\end{abstract}

\begin{IEEEkeywords}
Model order reduction, data-driven methods,  tensor decomposition.
\end{IEEEkeywords}

\section{Introduction}
Model order reduction serves as a fundamental methodology for simplifying complex dynamical systems while preserving their essential dynamic behavior \cite{baur2014model,lassila2014model,chinesta2011short,chinesta2016model,touze2021model,sandberg2009model}. By reducing the state dimensionality of a system, model order reduction facilitates more efficient simulation, analysis, and control, particularly for systems with large state spaces or high computational complexity. Traditional model order reduction techniques, including balanced truncation \cite{moore1981principal,gugercin2004survey,sandberg2004balanced,sandberg2010extension}, proper orthogonal decomposition \cite{berkooz1993proper,weiss2019tutorial}, and the eigensystem realization algorithm \cite{juang1985eigensystem,pappa1993consistent}, have been extensively studied for linear and nonlineaar systems. The core idea behind these methods is to construct a Hankel matrix (or its variants) \cite{widom1966hankel,tyrtyshnikov1994bad}  that encodes the system's input-output behavior over time and apply matrix singular value decomposition to identify the most important dynamic modes. Model order reduction has achieved notable success across various applications, including robotics \cite{goury2018fast,bolduc1998review}, fluid dynamics \cite{carlberg2013gnat,lassila2014model}, structural health monitoring \cite{rosafalco2021online,bolandi2019novel},  signal processing \cite{vaseghi2008advanced,baraniuk2010low}, and coarse-graining in multiscale modeling \cite{gorban2006model,saunders2013coarse}. However, these classical approaches are inherently tailored for systems with vector variables and do not readily extend to tensor-based dynamical systems, where the state, input, and output variables are represented as higher-order tensors.

Numerous real-world systems, such as those found in image and video processing, biological networks, and chemical reactions, exhibit intricate multidimensional relationships that are more naturally and effectively captured using tensor-based dynamical systems \cite{chen2024tensor,favier2023tensor,zhao2013kernelization,10759686,chen2022explicit}. We are particularly interested in T-product-based dynamical systems (TPDSs), a special class of tensor-based systems where the system evolution  is governed by the T-product between a third-order dynamic tensor and a third-order state tensor. Originally introduced by Hoover et al. \cite{hoover2021new}, TPDSs extend the classical linear system framework to third-order tensor data, offering a powerful tool for modeling complex interactions in structured data such as images, videos, and spatial-temporal signals \cite{koniusz2021tensor,zhou2012compact,alawode2025learning}.  The system-theoretic properties of TPDSs, such as stability, controllability, and observability, have been systematically investigated, along with optimal control techniques like state feedback control \cite{chen2024tensor, hoover2021new}. Additionally, data-driven approaches for the analysis and control of TPDSs have also been actively explored, which enhances the applicability to real-world, multidimensional data  \cite{mao2024data,he2025data}. Nevertheless, the development of data-driven model order reduction techniques for TPDSs remains an open area.

To apply the classical model order reduction approaches to input-output TPDSs,  the tensors must first be vectorized, resulting in an exceedingly high-dimensional system representation, where the number of states and parameters grows cubically with the dimensionality of the tensors. More critically, this vectorization process disrupts the inherent multidimensional structures of the data, such as correlations, redundancies, and spatial-temporal dependencies, making it impossible to directly exploit these rich structural patterns for efficient representation and computation \cite{chen2021multilinear}. To address these challenges, T-product algebra, including  T-singular value decomposition, provides a natural and powerful foundation for developing data-driven model order reduction techniques tailored  for TPDSs. Indeed, several studies have demonstrated that leveraging the unique properties of the T-product can significantly enhance the efficiency of representation and computation for TPDSs \cite{chen2024tensor,mao2024data,he2025data}.

The goal of this article is to develop a novel and efficient model order reduction framework for input-output TPDSs. Specifically, we explore different flavors of balanced truncation methods: (i) T-balanced truncation, which involves solving tensor Lyapunov equations to obtain controllability and observability Gramians, followed by the computation of the balancing transform; (ii) T-balanced proper orthogonal decomposition, which directly and more efficiently constructs the balancing transform from forward/adjoint snapshots; and (iii) the T-eigensystem realization algorithm, which relies on snapshots from the forward model and can also be applied to experimental data. Similar to the original approaches, the core idea behind these methods is to construct a Hankel tensor (or its variants) that captures the system’s input-output behavior over time and apply T-singular value decomposition to identify the most significant dynamic modes by truncating singular tuples. Moreover, we present detailed analyses of the memory and computational complexity for the proposed methods and compare them with their corresponding standard methods using numerical examples.

Data-driven model order reduction for input-output TPDSs has significant potential applications in image and video compression, addressing the growing need for efficient methods to manage and process large-scale visual data \cite{zheng2023approximation,yin2018multiview,tarzanagh2018fast}. In image and video dynamics, data are often represented as time-series high-dimensional third-order tensors, which can be effectively modeled using TPDSs. However, the complexity and size of these tensors can result in massive volumes of information that are computationally expensive to store, process, and transmit. As image and video resolutions increase and video frame rates become higher, the amount of data generated grows exponentially. Model order reduction techniques for TPDSs provide a promising solution to these challenges by reducing the dimensionality of the underlying system, effectively compressing the data while preserving the most critical features and dynamics.  This approach not only facilitates efficient data storage and transmission but also accelerates downstream tasks such as filtering, object tracking, and motion analysis.

This article is organized into five sections. Section \ref{sec:prelim} provides an overview of the T-product with its associated algebra and traditional model order reduction methods. In Section \ref{sec:meth}, we introduce advanced T-product-based model order reduction techniques, including T-balanced truncation, T-balanced proper
orthogonal decomposition, and the T-eigensystem realization
algorithm for input-output TPDSs by leveraging  T-singular value decomposition. Section \ref{sec:num} presents numerical examples, including both synthetic data and a case study on image data, to illustrate the proposed methods. Finally, we conclude with
a discussion of future directions in Section \ref{sec:con}.

\section{Preliminary}\label{sec:prelim}

\subsection{Tensors}
As a generalization of vectors and matrices, tensors are multidimensional arrays that extend the concept of linear algebra to higher orders \cite{chen2024tensor, kolda2009tensor, ragnarsson2012block,pickard2024kronecker}. The order of a tensor is determined by the number of its dimensions, and each dimension is referred to as a mode. This article particularly deals with third-order tensors, which we denote by $\mathscr{A} \in \mathbb{R}^{n \times m \times s}$. We begin by introducing the T-product, a fundamental operation for third-order tensors that facilitates their multiplication via circular convolution \cite{kilmer2013third, kilmer2011factorization, zhang2016exact,miao2020generalized,chang2022t}.
\begin{definition}
The T-product between two third-order tensor $\mathscr{A}\in\mathbb{R}^{n\times m\times s}$ and $\mathscr{B}\in\mathbb{R}^{m\times h\times s}$ is defined as 
\begin{equation}
\mathscr{A}\circledast\mathscr{B} = \mu^{-1}\big(\xi(\mathscr{A})\mu(\mathscr{B})\big) \in\mathbb{R}^{n\times h\times s},
\end{equation}
where  $\xi(\cdot)$ and $\mu(\cdot)$ denote the block circulant and unfold operators defined as
\begin{align*}\label{eq:bcirc}
\xi(\mathscr{A})&= \begin{bmatrix}
\mathscr{A}_{::1} & \mathscr{A}_{::s} & \cdots & \mathscr{A}_{::2}\\
\mathscr{A}_{::2} & \mathscr{A}_{::1} & \cdots & \mathscr{A}_{::3}\\
\vdots & \vdots & \ddots & \vdots\\
\mathscr{A}_{::s} & \mathscr{A}_{::(s-1)} & \cdots & \mathscr{A}_{::1}
\end{bmatrix}\in\mathbb{R}^{ns\times ms},\\
\mu(\mathscr{B}) &= \begin{bmatrix}
\mathscr{B}_{::1}^\top &
\mathscr{B}_{::2}^\top&
\cdots&
\mathscr{B}_{::s}^\top
\end{bmatrix}^\top\in\mathbb{R}^{ms\times h},
\end{align*}
respectively. Note that $\mathscr{A}_{::j}$ refers to the frontal slices, which are matrices obtained by fixing the index of the third mode while allowing the indices of the first two modes to vary.
\end{definition}

\subsubsection{Basic T-Product Algebra} Many fundamental linear algebra notions and operations, including identity, transpose, inverse, orthogonality, and block matrices, can be generalized to third-order tensors through the T-product \cite{kilmer2013third, kilmer2011factorization, zhang2016exact}.

\begin{definition}
The T-identity tensor $\mathscr{I}\in\mathbb{R}^{n\times n\times s}$ is defined such that its first frontal slice $\mathscr{I}_{::1}$ is the identity matrix, while all other frontal slices are zero matrices.
\end{definition}

\begin{definition}
The T-transpose of a third-order tensor $\mathscr{A} \in \mathbb{R}^{n \times m \times s}$, denoted as $\mathscr{A}^\top\in\mathbb{R}^{m\times n\times s}$, is obtained by transposing each frontal slice and reversing the order of the transposed slices from the second to the last. If $\mathscr{A}=\mathscr{A}^\top$, it is called T-symmetric. 
\end{definition}

\begin{definition}
The T-inverse of a third-order tensor $\mathscr{A} \in \mathbb{R}^{n \times n \times s}$, denoted as $\mathscr{A}^{-1}\in \mathbb{R}^{n \times n \times r}$, satisfies $\mathscr{T} \circledast \mathscr{T}^{-1} = \mathscr{T}^{-1} \circledast \mathscr{T} = \mathscr{I}$. The left and
right T-inverses of $\mathscr{A}$ can be defined similarly. 
\end{definition}

\begin{definition}
A third-order tensor $\mathscr{A} \in \mathbb{R}^{n \times n \times s}$ is called T-orthogonal if $\mathscr{A} \circledast \mathscr{A}^\top = \mathscr{A}^\top \circledast \mathscr{A} = \mathscr{I}$.   
\end{definition}

\begin{definition}
Given two third-order tensors $\mathscr{A}\in\mathbb{R}^{n\times m\times s}$ and $\mathscr{B}\in\mathbb{R}^{n\times h\times s}$, the row block tensor, denoted as $[\mathscr{A} \text{ }\mathscr{B}]\in\mathbb{R}^{n\times (m+h)\times s}$, is formed by concatenating $\mathscr{A}$ and $\mathscr{B}$ along the second mode. On the other other, given two  third-order tensors $\mathscr{A}\in\mathbb{R}^{n\times m\times s}$ and $\mathscr{B}\in\mathbb{R}^{h\times m\times s}$, the column block tensor, denoted as $[\mathscr{A} \text{ }\mathscr{B}]^\top\in\mathbb{R}^{(n+h)\times m\times s}$, is formed by concatenating $\mathscr{A}$ and $\mathscr{B}$ along the first mode.
\end{definition}

Note that we use the same notations for matrix and T-product operations (e.g., transpose, inverse, and block matrices/tensors) for convenience. Moreover, all operations mentioned above can be computed or verified through $\xi(\cdot)$. For instance, $\mathscr{I}\in\mathbb{R}^{n\times n\times s}$ is a T-identity tensor if and only if $\xi(\mathscr{I})\in\mathbb{R}^{ns\times ns}$ is an identity matrix. 

\subsubsection{T-Singular Value Decomposition}
Matrix singular value decomposition (SVD) can  be extended to
third-order tensors using the T-product, which plays a salient role in data-driven model order reduction for TPDSs \cite{kilmer2013third, braman2010third}. 

\begin{definition}
The T-singular value decomposition (T-SVD) of a third-order tensor $\mathscr{A}\in\mathbb{R}^{n\times m\times s}$ is defined as
\begin{equation}
\mathscr{A}=\mathscr{U}\circledast\mathscr{S}\circledast\mathscr{U}^{\top},
\end{equation}
where $\mathscr{U}\in\mathbb{R}^{n\times n\times s}$ and $\mathscr{V}\in\mathbb{R}^{m\times m\times s}$ are T-orthogonal, and $\mathscr{S}\in\mathbb{R}^{n\times m\times s}$ is an F-rectangle diagonal tensor (i.e., each frontal slice of $\mathscr{S}$ is a rectangle diagonal matrix). The vectors $\mathscr{S}_{jj:} \in \mathbb{R}^{s}$ are referred to as the singular tuples of $\mathscr{A}$.
\end{definition}

T-SVD can be derived by applying the discrete Fourier transform and matrix SVD.  Specifically, for a third-order tensor $\mathscr{A} \in \mathbb{R}^{n \times m \times s}$, the process begins by performing the discrete Fourier transform on $\xi(\mathscr{A})$, which yields
\begin{align*}
\mathcal{F}\{\xi(\mathscr{A})\} =&(\textbf{F}_n \otimes \textbf{I}_s)  \xi(\mathscr{A})  (\textbf{F}_n^* \otimes \textbf{I}_s)\\ 
=& \texttt{blkdiag}(\textbf{A}_1, \textbf{A}_2, \dots, \textbf{A}_s),
\end{align*}
where  $\texttt{blkdiag}$ is the MATLAB block diagonal function, $\otimes$ represents the Kronecker product, the superscript $*$ denotes the conjugate transpose, $\textbf{I}_r \in \mathbb{R}^{s \times s}$ is the identity matrix, and $\textbf{F}_n \in \mathbb{C}^{n \times n}$ is the discrete Fourier transform matrix 
\begin{equation*}
\textbf{F}_n = \frac{1}{\sqrt{n}} \begin{bmatrix}
1 & 1 & 1 & \cdots & 1 \\
1 & \omega & \omega^2 & \cdots & \omega^{n-1} \\
\vdots & \vdots & \vdots & \ddots & \vdots \\
1 & \omega^{n-1} & \omega^{2(n-1)} & \cdots & \omega^{(n-1)^2}
\end{bmatrix}
\end{equation*}
with $\omega = \exp{\left(\frac{-2\pi i}{n}\right)}$. Next, the SVD of each diagonal block matrix $\textbf{A}_j \in \mathbb{R}^{n \times n}$ is computed as $\textbf{A}_j = \textbf{U}_j \textbf{S}_j \textbf{V}_j^{\top}$ for $j = 1, 2, \ldots, s$. Finally, the left T-orthogonal  factor tensor $\mathscr{U}$ in the T-SVD of $\mathscr{A}$ can be obtained as
\begin{equation*}
\mathscr{U} = \xi^{-1} \big( (\textbf{F}_n^* \otimes \textbf{I}_s)  \texttt{blkdiag}(\textbf{U}_1, \textbf{U}_2, \ldots, \textbf{U}_s)  (\textbf{F}_n \otimes \textbf{I}_s) \big).
\end{equation*}
The right factor tensor $\mathscr{V}$ and the F-rectangle diagonal tensor $\mathscr{S}$ can be constructed in a similar manner. It is worth noting that the T-SVD of $\mathscr{A}$ differs from the matrix SVD of $\xi(\mathscr{A})$ since the block circulant matrix $\xi(\mathscr{S})$ is not diagonal.
Furthermore, T-eigenvalue decomposition (T-EVD) can be defined and computed similarly.
\begin{definition}
The T-eigenvalue decomposition (T-EVD) of a third-order tensor $\mathscr{A} \in \mathbb{R}^{n \times n \times s}$ is defined as
\begin{equation}
\mathscr{A} = \mathscr{U} \circledast \mathscr{D} \circledast \mathscr{U}^{-1},
\end{equation}
where $\mathscr{U} \in \mathbb{C}^{n \times n \times s}$ and $\mathscr{D} \in \mathbb{C}^{n \times n \times s}$ is a F-diagonal tensor. The tubes of $\mathscr{D}$, denoted as $\mathscr{D}_{jj:} \in \mathbb{C}^{s}$, are referred to as the eigentuples of $\mathscr{A}$.
\end{definition}

\subsection{Linear Model Order Reduction Techniques}

Various model order reduction techniques have been developed for linear dynamical systems, including balanced truncation, balanced proper orthogonal decomposition, and the eigensystem realization algorithm. A discrete-time input-output linear time-invariant system is represented as
\begin{align}\label{eq:linear}
\begin{cases}
\textbf{x}(t+1) = \textbf{A}\textbf{x}(t) + \textbf{B} \textbf{u}(t)\\
\textbf{y}(t) = \textbf{C} \textbf{x}(t)
\end{cases},   
\end{align}
where $\textbf{A}\in\mathbb{R}^{n\times n}$ is the state transition matrix, $\textbf{B}\in\mathbb{R}^{n\times m}$ is the control matrix, and $\textbf{C}\in\mathbb{R}^{l\times n}$ is the output matrix. Here, $\textbf{x}(t)\in\mathbb{R}^n$ is the state, $\textbf{u}(t)\in\mathbb{R}^m$ is the control input, and $\textbf{y}(t)\in\mathbb{R}^l$ is the output. The goal of model order reduction is to derive a reduced system with $\textbf{A}_\text{red} \in\mathbb{R}^{r\times r}$, $\textbf{B}_\text{red} \in\mathbb{R}^{r\times m}$, and $\textbf{C}_\text{red} \in\mathbb{R}^{l\times r}$ for $r< n$, which retains the essential dynamic behavior of the original system while reducing its state dimensionality.


\subsubsection{Balanced Truncation}
Balanced truncation (BT) is a widely used model order reduction technique for linear systems \cite{moore1981principal}. The procedure begins with the computation of the controllability Gramian $\textbf{W}_\text{c}$ and observability Gramian $\textbf{W}_\text{o}$ from solving the Lyapunov equations of the linear system \eqref{eq:linear}
\begin{align*}
    \textbf{W}_\text{c}-\textbf{A}\textbf{W}_\text{c}\textbf{A}^\top &= \textbf{B}\textbf{B}^\top,\\
    \textbf{A}^\top\textbf{W}_\text{o}\textbf{A} - \textbf{W}_\text{o} &= -\textbf{C}^\top\textbf{C}.
\end{align*}
Once the Gramians are obtained, Cholesky factorizations are performed to obtain
\begin{equation*}
    \textbf{W}_\text{c} = \textbf{Z}_\text{c}\textbf{Z}_\text{c}^\top\quad \text{and} \quad\textbf{W}_\text{o} = \textbf{Z}_\text{o}\textbf{Z}_\text{o}^\top,
\end{equation*}
where $\textbf{Z}_\text{c}\in\mathbb{R}^{n\times p}$ and $\textbf{Z}_\text{o}\in\mathbb{R}^{n\times q}$.
A Hankel matrix $\textbf{H}=\textbf{Z}_\text{o}^\top\textbf{Z}_\text{c}\in\mathbb{R}^{q\times p}$ is then constructed, and its truncated SVD is computed as $\textbf{H}\approx\textbf{U}\textbf{S}\textbf{V}^\top$ where $\textbf{U}\in\mathbb{R}^{q\times r}$, $\textbf{S}\in\mathbb{R}^{r\times r}$, and $\textbf{V}\in\mathbb{R}^{p\times r}$. This decomposition leads to the derivation of the two transformation matrices
$
    \textbf{P} = \textbf{Z}_\text{c}\textbf{V}\textbf{S}^{-\frac{1}{2}}\in\mathbb{R}^{n\times r}$ and $\textbf{Q} = \textbf{Z}_\text{o}\textbf{U}\textbf{S}^{-\frac{1}{2}}\in\mathbb{R}^{n\times r}$.
The reduced system is then computed as $\textbf{A}_\text{red}=\textbf{Q}^\top\textbf{A}\textbf{P}\in\mathbb{R}^{r\times r}$, $\textbf{B}_\text{red}=\textbf{Q}^\top\textbf{B}\in\mathbb{R}^{r\times m}$, and $\textbf{C}_\text{red}=\textbf{C}\textbf{P}\in\mathbb{R}^{l\times r}$. BT identifies and eliminates less significant states that have minimal impact on the system's input-output response, which yields a reduced model that preserves the critical dynamic behaviors of the original system.

\subsubsection{Balanced Proper Orthogonal Decomposition} 
The method of balanced
proper orthogonal decomposition (BPOD) was first proposed by Rowley \cite{rowley2017model} to address model order reduction challenges in high-dimensional input/output spaces, particularly within the context of fluid mechanics. This approach involves computing the balancing transformation from a collection of snapshots obtained through impulse response simulations. Suppose that the control matrix \textbf{B} can be rewritten as $\textbf{B}=[\textbf{b}_1 \text{ }\textbf{b}_2 \text{ }\cdots\text{ }\textbf{b}_m]$. The state responses for the linear system \eqref{eq:linear} over $t= 0,1,\dots,T$ can be constructed as
\begin{equation*}
    \textbf{X}^{(j)} = \begin{bmatrix}
        \textbf{b}_j & \textbf{A}\textbf{b}_j & \cdots & \textbf{A}^T\textbf{b}_j
    \end{bmatrix}
\end{equation*}
for $j=1,2,\dots,m$. These snapshots are then arranged as
\begin{equation*}
    \textbf{X} = \begin{bmatrix}
        \textbf{X}^{(1)} &  \textbf{X}^{(2)} & \cdots &  \textbf{X}^{(m)}
    \end{bmatrix}\in\mathbb{R}^{n\times mT},
\end{equation*}
and the empirical controllability Gramian is computed as $\textbf{W}_\text{c} = \textbf{X}\textbf{X}^\top$ \cite{lall1999empirical}. Similarly,
the procedure proceeds for constructing the output snapshot $\textbf{Y}\in\mathbb{R}^{n\times lL}$ and the associated empirical observability Gramian $\textbf{W}_\text{o}$ from the simulations of the adjoint system of \eqref{eq:linear} over $t= 0,1,\dots,L$. Finally, a Hankel matrix $\textbf{H}$ is constructed as $\textbf{H}=\textbf{Y}^\top\textbf{X}$, and the remaining steps follow exactly as those in balanced truncation. BPOD avoids the need to compute the controllability and observability Gramians by solving the Lyapunov equations, enhancing computational efficiency for large-scale systems.

\subsubsection{Eigensystem Realization Algorithm} 
The eigensystem realization algorithm (ERA), first introduced in \cite{juang1985eigensystem}, is a model identification and reduction tool for linear systems. The core idea involves constructing a generalized Hankel matrix using impulse response simulations or experiments, without needing access to the underlying linear system \eqref{eq:linear}. Specifically, snapshots are collected and arranged as
\begin{equation*}
\textbf{H}=
\begin{bmatrix}
    \textbf{Z}_0 & \textbf{Z}_1 & \cdots & \textbf{Z}_T\\
    \textbf{Z}_1 & \textbf{Z}_2 & \cdots & \textbf{Z}_{T+1}\\
    \vdots & \vdots & \ddots & \vdots\\
    \textbf{Z}_L & \textbf{Z}_{L+1} & \cdots & \textbf{Z}_{L+T}\\
\end{bmatrix}\in\mathbb{R}^{l(L+1)\times m(T+1)},
\end{equation*}
where $\textbf{Z}_j=\textbf{C}\textbf{A}^j\textbf{B}\in\mathbb{R}^{l\times m}$ are called the Markov parameters, and $T+L+2$ is the number of snapshots. Suppose that the truncated SVD of \textbf{H} is computed as $\textbf{H}\approx\textbf{U}\textbf{S}\textbf{V}^\top$ where $\textbf{U}\in\mathbb{R}^{l(L+1)\times r}$, $\textbf{S}\in\mathbb{R}^{r\times r}$, and $\textbf{V}\in\mathbb{R}^{m(T+1)\times r}$. The reduced system is then constructed as 
$\textbf{A}_\text{red}= \textbf{S}^{-\frac{1}{2}}\textbf{U}^\top\hat{\textbf{H}}\textbf{V}\textbf{S}^{-\frac{1}{2}}\in\mathbb{R}^{r\times r}$, $\textbf{B}_{\text{red}} = \textbf{S}^{-\frac{1}{2}}\textbf{U}^\top \textbf{H}_{:(1:m)}\in\mathbb{R}^{r\times m}$, and $\textbf{C}_{\text{red}} = \textbf{H}_{(1:l):}\textbf{U}\textbf{S}^{-\frac{1}{2}} \in\mathbb{R}^{l\times r}$, where 
\begin{equation*}
\hat{\textbf{H}}=
\begin{bmatrix}
    \textbf{Z}_1 & \textbf{Z}_2 & \cdots & \textbf{Z}_{T+1}\\
    \textbf{Z}_2 & \textbf{Z}_3 & \cdots & \textbf{Z}_{T+2}\\
    \vdots & \vdots & \ddots & \vdots\\
    \textbf{Z}_{L+1} & \textbf{Z}_{L+2} & \cdots & \textbf{Z}_{L+T+1}\\
\end{bmatrix}\in\mathbb{R}^{l(L+1)\times m(T+1)}.
\end{equation*}
ERA has been shown to be theoretically equivalent to BPOD for discrete-time systems, but with significantly lower computational cost \cite{ma2011reduced}.

\section{Data-driven Model Order Reduction for TPDSs}\label{sec:meth}

We consider a discrete-time input-output T-product-based dynamical system (TPDS) represented as
\begin{align}\label{eq:tpds}
\begin{cases}
\mathscr{X}(t+1) = \mathscr{A}\circledast\mathscr{X}(t) + \mathscr{B} \circledast\mathscr{U}(t)\\
\mathscr{Y}(t) = \mathscr{C} \circledast\mathscr{X}(t)
\end{cases},   
\end{align}
where $\mathscr{A}\in\mathbb{R}^{n\times n\times s}$ is the state transition tensor, $\mathscr{B}\in\mathbb{R}^{n\times m\times s}$ is the control tensor, and $\mathscr{C}\in\mathbb{R}^{l\times n\times s}$ is the output tensor. Here, $\mathscr{X}(t)\in\mathbb{R}^{n\times h\times s}$ is the state, $\mathscr{U}(t)\in\mathbb{R}^{m\times h\times s }$ is the control input, and $\mathscr{Y}(t)\in\mathbb{R}^{l\times h\times s}$ is the output. The objective of model order reduction is to obtain a reduced-order input-output TPDS that preserves the key dynamic characteristics of the original system while decreasing its state dimensionality.

The input-output TPDS \eqref{eq:tpds} can be reformulated as a linear system using the block circulant operation, resulting in the following representation
\begin{align}\label{eq:unfoldtpds}
\begin{cases}
\mu(\mathscr{X}(t+1)) = \xi(\mathscr{A})\mu(\mathscr{X}(t)) + \xi(\mathscr{B} )\mu(\mathscr{U}(t))\\
\mu(\mathscr{Y}(t)) = \xi(\mathscr{C}) \mu(\mathscr{X}(t))
\end{cases}.   
\end{align}
This linear representation allows for the application of standard linear model order reduction techniques, such as BT, BPOD, and  ERA. However, it overlooks the inherent structure of block circulant matrices and the potential advantages of advanced tensor algebra methods like T-SVD, which could enhance both computational and memory efficiency. We therefore introduce three extensions of data-driven model order reduction techniques for input-output TPDSs, including T-balanced truncation (T-BT), T-balanced proper orthogonal decomposition (T-BPOD), and the T-eigensystem realization algorithm (T-ERA). These techniques hold potential for applications like capturing image and video dynamics by efficiently reducing dimensionality while retaining essential features.

\subsection{T-Balanced Truncation}
Controllability and observability are fundamental concepts in BT-based model order reduction. For input-output TPDSs, both properties have been formally introduced in \cite{mao2024data, chen2024tensor}, with formulations that mirror those in classical linear systems theory.  Similar to BT, the first step in T-BT involves solving for the controllability and observability Gramians of the input-output TPDS \eqref{eq:tpds}, which can  be used to determine the system's controllability and observability.
\begin{definition}
    The controllability and observability Gramians of the input-output TPDS \eqref{eq:tpds} are defined as
    \begin{equation}
    \begin{split}
        \mathscr{W}^{\text{c}} &= \sum_{k=0}^\infty \mathscr{A}^k\circledast\mathscr{B}\circledast\mathscr{B}^\top \circledast (\mathscr{A}^\top)^k\in\mathbb{R}^{n\times n\times s},\\
        \mathscr{W}^{\text{o}} &= \sum_{k=0}^\infty (\mathscr{A}^\top)^k\circledast\mathscr{C}^\top\circledast\mathscr{C} \circledast \mathscr{A}^k\in\mathbb{R}^{n\times n\times s},
        \end{split}
    \end{equation}
respectively. 
\end{definition}

\begin{proposition}\label{prop:control}
    The input-output TPDS \eqref{eq:tpds} is controllable (observable) if and only if all eigentuples of the controllability Gramian $\mathscr{W}^{\text{c}}$ (observability Gramian $\mathscr{W}^{\text{o}}$) in the Fourier domain contain positive entries. 
\end{proposition}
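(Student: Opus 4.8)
The plan is to reduce the statement to the classical Popov--Belevitch--Hautus (PBH)/Gramian positivity criterion for the $s$ decoupled linear systems obtained by block-diagonalizing in the Fourier domain, and then lift the conclusion back through the T-product algebra. First I would apply the discrete Fourier transform to the block circulant reformulation \eqref{eq:unfoldtpds}: conjugating $\xi(\mathscr{A})$, $\xi(\mathscr{B})$, $\xi(\mathscr{C})$ by $(\textbf{F}_n\otimes\textbf{I}_s)$ yields $\texttt{blkdiag}(\textbf{A}_1,\dots,\textbf{A}_s)$, $\texttt{blkdiag}(\textbf{B}_1,\dots,\textbf{B}_s)$, $\texttt{blkdiag}(\textbf{C}_1,\dots,\textbf{C}_s)$, so the unfolded system decouples into $s$ independent LTI systems $(\textbf{A}_j,\textbf{B}_j,\textbf{C}_j)$. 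I would first confirm (or cite from \cite{mao2024data,chen2024tensor}) that controllability of the input-output TPDS \eqref{eq:tpds} is \emph{equivalent} to simultaneous controllability of all $s$ Fourier-domain pairs $(\textbf{A}_j,\textbf{B}_j)$ — this is the bridge that makes the whole argument go through, and it follows because $\xi(\cdot)$ is a similarity-type transform and the Kronecker conjugation is invertible.

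Next I would identify the Fourier-domain images of the Gramians. Because the T-product becomes block-diagonal matrix multiplication under $\mathcal{F}\{\xi(\cdot)\}$, the series defining $\mathscr{W}^{\text{c}}$ transforms termwise into $\texttt{blkdiag}\big(\sum_k \textbf{A}_j^k\textbf{B}_j\textbf{B}_j^\top(\textbf{A}_j^\top)^k\big)_{j=1}^s$, i.e.\ the $j$-th diagonal block is exactly the classical controllability Gramian $\textbf{W}_{\text{c},j}$ of $(\textbf{A}_j,\textbf{B}_j)$; analogously for $\mathscr{W}^{\text{o}}$. (I would note the convergence of the series is guaranteed under the standing stability assumption for TPDSs, or remark that the statement is about the formal/limiting Gramian as in the linear case.) Then, by the definition of T-EVD and the fact that the eigentuples of a T-symmetric tensor correspond, slice-by-slice in the Fourier domain, to the ordinary eigenvalues of the diagonal blocks, the eigentuples of $\mathscr{W}^{\text{c}}$ "contain positive entries" precisely when every $\textbf{W}_{\text{c},j}$ is positive definite.

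The argument then closes by invoking the standard linear-systems fact: $(\textbf{A}_j,\textbf{B}_j)$ is controllable if and only if $\textbf{W}_{\text{c},j}\succ 0$, and $(\textbf{A}_j,\textbf{C}_j)$ is observable if and only if $\textbf{W}_{\text{o},j}\succ 0$. Chaining the three equivalences — TPDS controllable $\Leftrightarrow$ all $(\textbf{A}_j,\textbf{B}_j)$ controllable $\Leftrightarrow$ all $\textbf{W}_{\text{c},j}\succ 0$ $\Leftrightarrow$ all eigentuples of $\mathscr{W}^{\text{c}}$ positive in the Fourier domain — gives the result, and symmetrically for observability. I expect the main obstacle to be the first equivalence: pinning down the precise definition of controllability/observability used for input-output TPDSs in \cite{mao2024data,chen2024tensor} and verifying it is genuinely equivalent to the slicewise (Fourier-domain) notion, rather than merely implied by it; once that correspondence is stated cleanly, the Gramian computation and the eigentuple--eigenvalue translation are routine consequences of the block-diagonalization property of the T-product.
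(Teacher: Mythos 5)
Your proposal is correct and follows essentially the same route as the paper: both arguments pass to the block circulant unfolding \eqref{eq:unfoldtpds}, identify $\xi(\mathscr{W}^{\text{c}})$ (resp.\ $\xi(\mathscr{W}^{\text{o}})$) as the classical Gramian of the unfolded system, and translate positive definiteness into positivity of the eigentuple entries via the Fourier-domain block diagonalization, with TPDS controllability/observability taken as equivalent to that of the unfolded representation per \cite{mao2024data,chen2024tensor}. The only cosmetic difference is that you decouple explicitly into the $s$ subsystems $(\textbf{A}_j,\textbf{B}_j,\textbf{C}_j)$ and apply the Gramian criterion blockwise, whereas the paper applies it once to the full block circulant Gramian; the content is the same.
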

\begin{proof}
    Based on the controllability condition for linear systems, the unfolded representation \eqref{eq:unfoldtpds} is controllable if and only if the controllability Gramian
    \begin{equation*}
        \textbf{W}^{\text{c}} = \sum_{k=0}^\infty \xi(\mathscr{A})^k\xi(\mathscr{B})\xi(\mathscr{B})^\top (\xi(\mathscr{A})^\top)^k\in\mathbb{R}^{ns\times ns}
    \end{equation*}
is positive definite. By leveraging the properties of the block circulant operator, it can be shown that $\textbf{W}^{\text{c}} = \xi(\mathscr{W}^{\text{c}})$, and then the eigenvalues of $\textbf{W}^{\text{c}}$ correspond to the combined entries of the eigentuples of $\mathscr{W}^{\text{c}}$ in the Fourier domain. Therefore, the unfolded representation \eqref{eq:unfoldtpds} is controllable if and only if all eigentuples of the controllability Gramian $\mathscr{W}^{\text{c}}$ in the Fourier domain contain positive entries. The result for the input-output TPDS \eqref{eq:tpds} follows immediately. A similar argument can be established for observability.
\end{proof}

According to the definition of T-EVD, the eigentuples of $\mathscr{W}^\text{c}$ in the Fourier domain can be computed from the eigenvalues of its diagonal block matrices under the Fourier transform, i.e., $\mathcal{F}(\xi(\mathscr{W}^\text{c}))=\texttt{blkdiag}(\textbf{W}^\text{c}_1, \textbf{W}^\text{c}_2,\dots, \textbf{W}^\text{c}_s)$ (similarly for $\mathscr{W}^\text{o}$). Moreover, the controllability and observability Gramians can be obtained by solving the T-Lyapunov equations associated with the input-output TPDS \eqref{eq:tpds}.

\begin{proposition}\label{prop:lyapunov}
    The controllability and observability Gramians $\mathscr{W}^\text{c}$ and $\mathscr{W}^\text{o}$ of the input-output TPDS \eqref{eq:tpds} can be obtained by solving the associated T-Lyapunov equations   defined as
    \begin{equation}
    \begin{split}
        \mathscr{W}^\text{c}-\mathscr{A}\circledast\mathscr{W}^\text{c}\circledast\mathscr{A}^\top &= \mathscr{B}\circledast\mathscr{B}^\top,\\
    \mathscr{A}^\top\circledast\mathscr{W}^\text{o}\circledast\mathscr{A} - \mathscr{W}^\text{o} &= -\mathscr{C}^\top\circledast\mathscr{C},
    \end{split}
    \end{equation}
respectively. 
\end{proposition}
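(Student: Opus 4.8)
The plan is to transfer the identity to the block circulant (equivalently, Fourier) domain, where the T-product collapses to ordinary matrix multiplication, so that the statement reduces to the classical fact about discrete-time Lyapunov equations already used in the proof of Proposition~\ref{prop:control}. An essentially equivalent, and slightly cleaner, route is a direct telescoping argument carried out at the tensor level; I would present that as the main computation and fall back to the Fourier picture for the uniqueness part of ``can be obtained by solving.''

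First I would record the homomorphism properties of $\xi(\cdot)$ that were invoked in Proposition~\ref{prop:control}: $\xi(\mathscr{P}\circledast\mathscr{Q})=\xi(\mathscr{P})\xi(\mathscr{Q})$, $\xi(\mathscr{P}^\top)=\xi(\mathscr{P})^\top$, and $\xi(\mathscr{P}+\mathscr{Q})=\xi(\mathscr{P})+\xi(\mathscr{Q})$, together with the invertibility of $\xi$. These give in particular $(\mathscr{A}^{k+1})^\top=(\mathscr{A}^\top)^{k+1}$, $\mathscr{A}^0=\mathscr{I}$, and $\mathscr{I}\circledast\mathscr{B}=\mathscr{B}$. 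Then, applying the T-product termwise to the defining series of $\mathscr{W}^\text{c}$,
\begin{align*}
\mathscr{A}\circledast\mathscr{W}^\text{c}\circledast\mathscr{A}^\top &=\sum_{k=0}^\infty\mathscr{A}^{k+1}\circledast\mathscr{B}\circledast\mathscr{B}^\top\circledast(\mathscr{A}^\top)^{k+1}\\
&=\sum_{k=1}^\infty\mathscr{A}^{k}\circledast\mathscr{B}\circledast\mathscr{B}^\top\circledast(\mathscr{A}^\top)^{k}=\mathscr{W}^\text{c}-\mathscr{B}\circledast\mathscr{B}^\top,
\end{align*}
which rearranges to the first T-Lyapunov equation; the identical reindexing applied to $\mathscr{A}^\top\circledast\mathscr{W}^\text{o}\circledast\mathscr{A}$ yields the second. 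The reindexing is legitimate precisely because the series converge, which is the T-product analogue of Schur stability for $\mathscr{A}$; I would state this stability hypothesis explicitly (equivalently $\rho(\xi(\mathscr{A}))<1$, i.e., every eigentuple of $\mathscr{A}$ in the Fourier domain lies inside the unit circle), since it is already implicit in the very definition of the Gramians. For the uniqueness claim I would apply $\mathcal{F}\{\xi(\cdot)\}$ to the T-Lyapunov equation, which decouples it into the $s$ independent matrix Lyapunov equations $\textbf{W}^\text{c}_j-\textbf{A}_j\textbf{W}^\text{c}_j\textbf{A}_j^*=\textbf{B}_j\textbf{B}_j^*$; under stability each $\rho(\textbf{A}_j)<1$, so each is uniquely solvable, hence so is the T-Lyapunov equation, and by the computation above its unique solution is $\mathscr{W}^\text{c}$ (similarly for $\mathscr{W}^\text{o}$).

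The only real obstacle is the bookkeeping around convergence and uniqueness rather than any deep step: the infinite sums defining $\mathscr{W}^\text{c}$ and $\mathscr{W}^\text{o}$ only make sense when $\mathscr{A}$ is stable, and without that hypothesis the Lyapunov equations need not have unique solutions, so the statement should carry (or inherit from earlier in the paper) that assumption. Everything else is the routine verification that $\xi$ intertwines T-product operations with their matrix counterparts, exactly as in the preceding proposition.
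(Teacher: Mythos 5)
Your proposal is correct, but it proves the identity by a different mechanism than the paper. The paper's proof never touches the defining series directly: it invokes the classical fact that the Gramians of the unfolded linear system \eqref{eq:unfoldtpds} solve the matrix Lyapunov equations, substitutes $\textbf{W}^{\text{c}}=\xi(\mathscr{W}^{\text{c}})$ (established in the proof of Proposition~\ref{prop:control}), and then uses the homomorphism property of $\xi$ to recognize the matrix Lyapunov residual as $\xi$ of the T-Lyapunov residual, so injectivity of $\xi$ forces the tensor equation to hold. You instead carry out a self-contained telescoping of the series $\mathscr{W}^{\text{c}}=\sum_{k\ge 0}\mathscr{A}^{k}\circledast\mathscr{B}\circledast\mathscr{B}^\top\circledast(\mathscr{A}^\top)^{k}$ entirely in the T-product algebra, using the same homomorphism facts only to license the algebraic manipulations, and you handle well-posedness separately by Fourier decoupling into $s$ matrix Lyapunov equations (which the paper defers to Corollary~\ref{coro:gramians} rather than using here). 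What your route buys is independence from the citation of the matrix-level result and an explicit treatment of two points the paper leaves implicit: the Schur-type stability hypothesis $\rho(\xi(\mathscr{A}))<1$ needed for the series to converge (and for the reindexing step to be legitimate), and the uniqueness of the solution, without which ``can be obtained by solving'' is not fully justified. What the paper's route buys is brevity: once $\textbf{W}^{\text{c}}=\xi(\mathscr{W}^{\text{c}})$ is in hand, the whole statement is a two-line transport of a known result through the algebra isomorphism. Both arguments are sound; yours is the more careful and elementary variant of essentially the same unfolding philosophy.
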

\begin{proof}
    According to Proposition \ref{prop:control}, the controllability and observability Gramians $\textbf{W}^{\text{c}}$ and $\textbf{W}^{\text{o}}$ of the unfolded representation \eqref{eq:unfoldtpds} satisfy $\textbf{W}^{\text{c}} = \xi(\mathscr{W}^{\text{c}})$ and $\textbf{W}^{\text{o}} = \xi(\mathscr{W}^{\text{o}})$. In addition, both matrices can be solved from the Lyapunov equations associated with the unfolded representation, i.e., 
    \begin{align*}
        \textbf{W}^{\text{c}}-\xi(\mathscr{A})\textbf{W}^{\text{c}}\xi(\mathscr{A})^\top &= \xi(\mathscr{B})\xi(\mathscr{B})^\top,\\
        \xi(\mathscr{A})^\top\textbf{W}^{\text{o}}\xi(\mathscr{A}) -\textbf{W}^{\text{o}} &= -\xi(\mathscr{C})^\top\xi(\mathscr{C}).
    \end{align*}
Replace $\textbf{W}^{\text{c}}$ and $\textbf{W}^{\text{o}}$ with $\xi(\mathscr{W}^{\text{c}})$ and $\xi(\mathscr{W}^{\text{o}})$, and by the definition
of the block circulant operation, it follows that
\begin{align*}
   &\xi(\mathscr{W}^{\text{c}})-\xi(\mathscr{A})\xi(\mathscr{W}^{\text{c}})\xi(\mathscr{A})^\top - \xi(\mathscr{B})\xi(\mathscr{B})^\top\\
   &= \xi(\mathscr{W}^\text{c}-\mathscr{A}\circledast\mathscr{W}^\text{c}\circledast\mathscr{A}^\top- \mathscr{B}\circledast\mathscr{B}^\top), 
\end{align*}
which implies that $\mathscr{W}^\text{c}$ is the solution of the T-Lyapunov equation $\mathscr{W}^\text{c}-\mathscr{A}\circledast\mathscr{W}^\text{c}\circledast\mathscr{A}^\top = \mathscr{B}\circledast\mathscr{B}^\top$. A similar argument can be
established for the observability Gramian $\mathscr{W}^{\text{o}}$.
\end{proof}

By applying the Fourier transform, the T-Lyapunov equations can be decoupled into a series of independent matrix Lyapunov equations, thereby significantly simplifying the computation of the controllability and observability Gramians. Assume that the  diagonal block matrices of $\mathscr{A}$, $\mathscr{B}$, and $\mathscr{C}$ in the Fourier domain are $\textbf{A}_j\in\mathbb{R}^{n\times n}$, $\textbf{B}_j\in\mathbb{R}^{n\times m}$, and $\textbf{C}_j\in\mathbb{R}^{l\times n}$, respectively, for $j=1,2,\dots, s$.

\begin{corollary}\label{coro:gramians}
    The controllability  Gramians $\mathscr{W}^\text{c}$ of the input-output TPDS \eqref{eq:tpds} can be computed as
    \begin{equation*}
        \mathscr{W}^\text{c} = \xi^{-1} \big( (\textbf{F}_n^* \otimes \textbf{I}_s) \texttt{blkdiag}(\textbf{W}^\text{c}_1, \dots, \textbf{W}^{\text{c}}_s)(\textbf{F}_n \otimes \textbf{I}_s) \big),
    \end{equation*}
    where $\textbf{W}^\text{c}_j$ are the solutions of the Lyapunov equations
    \begin{equation*}
        \textbf{W}^\text{c}_j-\textbf{A}_j\textbf{W}^\text{c}_j\textbf{A}_j^\top = \textbf{B}_j\textbf{B}_j^\top
    \end{equation*}
    for $j=1,2,\dots, s$. 
\end{corollary}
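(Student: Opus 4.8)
The plan is to push the T-Lyapunov characterization of $\mathscr{W}^\text{c}$ from Proposition~\ref{prop:lyapunov} down to the block-circulant (unfolded) level and then into the Fourier domain, where the single tensor equation splits into $s$ decoupled matrix Lyapunov equations. By Proposition~\ref{prop:lyapunov}, $\mathscr{W}^\text{c}$ solves $\mathscr{W}^\text{c}-\mathscr{A}\circledast\mathscr{W}^\text{c}\circledast\mathscr{A}^\top = \mathscr{B}\circledast\mathscr{B}^\top$. Applying $\xi(\cdot)$ to both sides and using that $\xi$ carries the T-product to ordinary matrix multiplication and the T-transpose to the ordinary transpose, i.e.\ $\xi(\mathscr{A}\circledast\mathscr{W}^\text{c}\circledast\mathscr{A}^\top)=\xi(\mathscr{A})\xi(\mathscr{W}^\text{c})\xi(\mathscr{A})^\top$, yields the matrix Lyapunov equation
\[
\xi(\mathscr{W}^\text{c}) - \xi(\mathscr{A})\,\xi(\mathscr{W}^\text{c})\,\xi(\mathscr{A})^\top = \xi(\mathscr{B})\,\xi(\mathscr{B})^\top .
\]

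Next I would conjugate this identity by the unitary matrix $\textbf{F}_n\otimes\textbf{I}_s$: left-multiply by $\textbf{F}_n\otimes\textbf{I}_s$, right-multiply by $\textbf{F}_n^*\otimes\textbf{I}_s$, and insert $(\textbf{F}_n^*\otimes\textbf{I}_s)(\textbf{F}_n\otimes\textbf{I}_s)=\textbf{I}$ between consecutive factors. By the block-diagonalization identity recalled in the T-SVD derivation, $\mathcal{F}\{\xi(\mathscr{A})\}=\texttt{blkdiag}(\textbf{A}_1,\dots,\textbf{A}_s)$ and $\mathcal{F}\{\xi(\mathscr{B})\}=\texttt{blkdiag}(\textbf{B}_1,\dots,\textbf{B}_s)$, while $\mathcal{F}\{\xi(\mathscr{A}^\top)\}=\mathcal{F}\{\xi(\mathscr{A})\}^{*}$ is also block diagonal. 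Writing $\mathcal{F}\{\xi(\mathscr{W}^\text{c})\}=\texttt{blkdiag}(\textbf{W}^\text{c}_1,\dots,\textbf{W}^\text{c}_s)$ and using that a product of block-diagonal matrices is block diagonal with the diagonal blocks multiplied slotwise, the conjugated equation decouples into the $s$ independent matrix Lyapunov equations $\textbf{W}^\text{c}_j-\textbf{A}_j\textbf{W}^\text{c}_j\textbf{A}_j^\top=\textbf{B}_j\textbf{B}_j^\top$ for $j=1,2,\dots,s$; each is uniquely solvable because the Schur stability that makes the Gramian series converge is inherited by every $\textbf{A}_j$.

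Finally, recovering $\mathscr{W}^\text{c}$ from $\mathcal{F}\{\xi(\mathscr{W}^\text{c})\}$ by inverting first the Fourier transform and then $\xi(\cdot)$ gives exactly
\[
\mathscr{W}^\text{c} = \xi^{-1}\big( (\textbf{F}_n^* \otimes \textbf{I}_s)\,\texttt{blkdiag}(\textbf{W}^\text{c}_1,\dots,\textbf{W}^{\text{c}}_s)\,(\textbf{F}_n \otimes \textbf{I}_s) \big) ,
\]
which is the claimed formula. I expect the only point needing genuine care to be the bookkeeping of the T-transpose under the Fourier transform --- confirming $\xi(\mathscr{A}^\top)=\xi(\mathscr{A})^\top$ and that in the Fourier domain this becomes the conjugate transpose of $\texttt{blkdiag}(\textbf{A}_1,\dots,\textbf{A}_s)$, so the block equations carry $\textbf{A}_j^{*}$ (written $\textbf{A}_j^\top$ here, the two coinciding on the real Fourier blocks and paired by conjugate symmetry otherwise). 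The remaining ingredients --- invertibility of $\textbf{F}_n\otimes\textbf{I}_s$ and closure of block-diagonal matrices under products and transposition --- are routine.
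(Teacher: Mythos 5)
Your proposal is correct and follows essentially the same route as the paper's proof: apply $\xi(\cdot)$ to the T-Lyapunov equation from Proposition~\ref{prop:lyapunov}, conjugate by the Fourier transform to block-diagonalize $\xi(\mathscr{A})$, $\xi(\mathscr{B})$, and $\xi(\mathscr{W}^\text{c})$, decouple into $s$ matrix Lyapunov equations, and invert $\mathcal{F}$ and $\xi$ to recover $\mathscr{W}^\text{c}$. Your added remark that the Fourier-domain blocks are generally complex, so the blockwise equations really carry $\textbf{A}_j^{*}$ and $\textbf{B}_j\textbf{B}_j^{*}$ (with conjugate-symmetric pairing), is a point the paper's proof glosses over and is handled correctly in your write-up.
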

\begin{proof}
    Based on Proposition \ref{prop:lyapunov} and the properties of the block circulant operator and the discrete Fourier transform, it can be shown that
    \begin{align*}
        &\mathcal{F}\{\xi(\mathscr{W}^\text{c}-\mathscr{A}\circledast\mathscr{W}^\text{c}\circledast\mathscr{A}^\top - \mathscr{B}\circledast\mathscr{B}^\top)\}\\
        &=\mathcal{F}\{\xi(\mathscr{W}^\text{c})\} - \mathcal{F}\{\xi(\mathscr{A}\circledast\mathscr{W}^\text{c}\circledast\mathscr{A}^\top)\}-\mathcal{F}\{\xi(\mathscr{B}\circledast\mathscr{B}^\top)\}\\
        &=\mathcal{F}\{\xi(\mathscr{W}^\text{c})\}-\mathcal{F}\{\xi(\mathscr{A})\}\mathcal{F}\{\xi(\mathscr{W}^\text{c})\}\mathcal{F}\{\xi(\mathscr{A})\}^\top\\
        &-\mathcal{F}\{\xi(\mathscr{B})\}\mathcal{F}\{\xi(\mathscr{B})\}^\top.
    \end{align*}
Each matrix in the final step is block diagonal. For example, $\mathcal{F}\{\xi(\mathscr{A})\}=\texttt{blkdiag}(\textbf{A}_1,\textbf{A}_2,\dots,\textbf{A}_s)$. Therefore, the Lyapunov equation can be divided into $s$ smaller Lyapunov equations using the corresponding diagonal block matrices. The result then follows immediately. 
\end{proof}

The construction of the observability Gramain $\mathscr{W}^{\text{o}}$ can be achieved similarly.  Once the controllability and observability Gramians $\mathscr{W}^{\text{c}}$ and $\mathscr{W}^{\text{o}}$ are computed, the next step in T-BT  involves constructing a Hankel tensor. The objective is to decompose $\mathscr{W}^{\text{c}}$ and $\mathscr{W}^{\text{o}}$ such that
\begin{equation*}
    \mathscr{W}^{\text{c}} = \mathscr{Z}_{\text{c}}\circledast\mathscr{Z}_{\text{c}}^\top \text{ and } \mathscr{W}^{\text{o}} = \mathscr{Z}_{\text{o}}\circledast\mathscr{Z}_{\text{o}}^\top.
\end{equation*}
To achieve such decompositions, we exploit T-SVD. For the controllability Gramian, since $\mathscr{W}^{\text{c}}$ is T-symmetric, the T-SVD of $\mathscr{W}^{\text{c}}$ can be computed as $\mathscr{W}^{\text{c}}=\mathscr{U}_\text{c}\circledast\mathscr{S}_\text{c}\circledast\mathscr{U}_{\text{c}}^\top$, and the resulting factor tensor can be computed as $\mathscr{Z}_{\text{c}}=\mathscr{U}_{\text{c}}\circledast\mathscr{S}_{\text{c}}^{\frac{1}{2}}\in\mathbb{R}^{n\times p\times s}$, where $p\leq n$, and $\mathscr{S}_{\text{c}}^{\frac{1}{2}}$ is defined as
\begin{equation*}
    \mathscr{S}_{\text{c}}^{\frac{1}{2}} = \xi^{-1}\big(\mathcal{F}^{-1}\{\mathcal{F}\{\xi(\mathscr{S}_{\text{c}})\}^{\frac{1}{2}}\}\big).
\end{equation*}
Suppose that $\mathscr{Z}_{\text{o}}\in\mathbb{R}^{n\times q\times s}$ is obtained similarly from the T-SVD of $\mathscr{W}^{\text{o}}$. The Hankel tensor of the input-output TPDS \eqref{eq:tpds} is then defined as $\mathscr{H}=\mathscr{Z}_{\text{o}}^\top\circledast\mathscr{Z}_{\text{c}}\in\mathbb{R}^{q\times p\times s}$.

A key property of T-SVD is the optimality of the truncated T-SVD for approximation \cite{kilmer2013third,kilmer2011factorization,zhang2018randomized}. In other words, truncating the singular tuples based on their Frobenius norm can yield the best approximation of the original tensor. Consequently, we apply T-SVD  to the Hankel tensor $\mathscr{H}$ and truncate its singular tuples based on their Frobenius norm to effectively reduce the state dimensionality. Suppose that the truncated T-SVD of $\mathscr{H}$ is given by $\mathscr{H}\approx\mathscr{U}\circledast\mathscr{S}\circledast\mathscr{V}^\top$ where $\mathscr{U}\in\mathbb{R}^{q\times r\times s}$, $\mathscr{S}\in\mathbb{R}^{r\times r\times s}$, and $\mathscr{V}\in\mathbb{R}^{p\times r\times s}$ for $r<n$. The transformation tensors therefore can be constructed as
\begin{equation}
\begin{split}
    \mathscr{P} &= \mathscr{Z}_{\text{c}}\circledast\mathscr{V}\circledast\mathscr{S}^{-\frac{1}{2}}\in\mathbb{R}^{n\times r\times s},\\
    \mathscr{Q} &= \mathscr{Z}_{\text{o}}\circledast\mathscr{U}\circledast\mathscr{S}^{-\frac{1}{2}}\in\mathbb{R}^{n\times r\times s},
\end{split}
\end{equation}
where $\mathscr{S}^{-\frac{1}{2}}$ is defined similarly as $\mathscr{S}^{\frac{1}{2}}$. The reduced input-output TPDS parameter tensors are then computed as 
\begin{equation}\label{eq:redtpds}
    \begin{split}
        \mathscr{A}_{\text{red}}&=\mathscr{Q}^\top\circledast\mathscr{A}\circledast\mathscr{P}\in\mathbb{R}^{r\times r\times s},\\
        \mathscr{B}_{\text{red}}&=\mathscr{Q}^\top\circledast\mathscr{B}\in\mathbb{R}^{r\times m\times s},\\
        \mathscr{C}_{\text{red}}&=\mathscr{C}\circledast\mathscr{P}\in\mathbb{R}^{l\times r\times s}.
    \end{split}
\end{equation}

The detailed steps of T-BT are summarized in Algorithm \ref{alg:tbt}. Most steps in the algorithm, including Steps 2 through 7, can be efficiently implemented in the Fourier domain by leveraging the diagonal block matrices of $\mathscr{A}$, $\mathscr{B}$, and $\mathscr{C}$. The final reduced TPDS parameter tensors $\mathscr{A}_{\text{red}}$, $\mathscr{B}_{\text{red}}$, and $\mathscr{C}_{\text{red}}$ can be obtained by applying the inverse Fourier transform on their corresponding diagonal block matrices. Moreover, it can be readily shown that T-BT preserves controllability and observability, ensuring that the reduced system maintains both properties. Furthermore, compared to standard BT for input-output TPDSs, T-BT offers several key advantages, including reduced memory consumption, lower computational complexity, and comparable error bounds.

\begin{algorithm}[t]
\caption{T-Balanced Truncation}
\label{alg:tbt}
\begin{algorithmic}[1]
\STATE Given the input-output TPDS \eqref{eq:tpds} with $\mathscr{A}$, $\mathscr{B}$, and $\mathscr{C}$ \\
\STATE Compute the controllability and observability Gramians
 $\mathscr{W}^{\text{c}}$ and $\mathscr{W}^{\text{o}}$ according to Corollary \ref{coro:gramians} \\
\STATE Compute the T-SVDs of $\mathscr{W}^{\text{c}}$ and $\mathscr{W}^{\text{o}}$ and obtain the factor tensors $\mathscr{Z}_{\text{c}}$ and $\mathscr{Z}_{\text{o}}$ \\
\STATE Construct the Hankel tensor $\mathscr{H}=\mathscr{Z}_{\text{o}}^\top\circledast\mathscr{Z}_{\text{c}}$ \\
\STATE Compute the truncated T-SVD of $\mathscr{H}$, i.e., $\mathscr{H}\approx \mathscr{U}*\mathscr{S}*\mathscr{V}^\top$, by truncating the singular tuples of $\mathscr{H}$ based on their Frobenius norm\\
\STATE Calculate the transformation tensors  $\mathscr{P} = \mathscr{Z}_{\text{c}}\circledast\mathscr{V}\circledast\mathscr{S}^{-\frac{1}{2}}$ and $\mathscr{Q}=\mathscr{Z}_{\text{o}}\circledast\mathscr{U}\circledast\mathscr{S}^{-\frac{1}{2}}$\\
\STATE  The reduced input-output TPDS with $\mathscr{A}_{\text{red}} = \mathscr{Q}^\top\circledast\mathscr{A}\circledast\mathscr{P}$,
 $\mathscr{B}_{\text{red}}=\mathscr{Q}\circledast\mathscr{B}$, and $\mathscr{C}_{\text{red}}=\mathscr{C}\circledast\mathscr{P}$
\end{algorithmic}
\end{algorithm}

\begin{remark}\label{rm:1}
Suppose that $p=q=n$ and we truncate $k$ singular tuples in the T-SVD of $\mathscr{H}$. The total number of parameters in the reduced input-output TPDS after T-BT is $(n-k)^2s+(n-k)ms+(n-k)ls$. Under the same level of truncation (i.e., truncating $k$ singular values), the total number of parameters in the reduced system using BT on the unfolded form \eqref{eq:unfoldtpds} is $(ns-k)^2+(ns-k)ms+(ns-k)ls$. More importantly, the reduced system becomes linear and cannot be transformed back into the form of TPDSs due to the loss of the block circulant structure.
\end{remark}

\begin{remark}
    Suppose that $p=q=n$. The computational complexity of T-BT comprises the following components: (i) computing the Fourier transforms of $\mathscr{A}$, $\mathscr{B}$, and $\mathscr{C}$ requires $\mathcal{O}((n^2+nm+nl)s\log{s})$; (ii) solving the T-Lyapunov equations for the controllability and observability Gramians $\mathscr{W}^{\text{c}}$ and $\mathscr{W}^{\text{o}}$ involves $\mathcal{O}(n^3s+n^2ms+n^2ls)$; (iii) building the Hankel tensor $\mathscr{H}$,  computing its T-SVD, and constructing the transformation tensor $\mathscr{P}$ and $\mathscr{Q}$ take $\mathcal{O}(n^3s)$; (iv) computing the reduced input-output TPDS (with the inverse Fourier transform) has the same complexity as (ii). Combining all components, the total computational complexity of T-BT is
    \begin{equation*}
        \mathcal{O}((n^2+nm+nl)s\log{s} + n^3s+n^2ms+n^2ls).
    \end{equation*}
In contrast, applying standard BT to the unfolded linear system \eqref{eq:unfoldtpds} results in a significantly higher complexity of $\mathcal{O}(n^3 s^3 + n^2ms^3+n^2ls^3)$.
\end{remark}

Last but not least, we can further derive an error bound for T-BT based on the truncation levels of singular tuples.  Before establishing this error bound, we  need to introduce the definitions of transfer functions and the $\mathcal{H}$-infinity
 norm in the context of input-output TPDSs, which can be defined through the block circulant operator. 
 
 \begin{definition}
     The transfer function of the input-output TPDS \eqref{eq:tpds} is defined as
     \begin{equation}
         \mathscr{G}(z) = \mathscr{C}\circledast(z\mathscr{I}-\mathscr{A})^{-1}\circledast \mathscr{B}\in\mathbb{R}^{l\times m\times s},
     \end{equation}
where $z$ is the complex frequency variable used in the Z-transform, and $\mathscr{I}\in\mathbb{R}^{n\times n\times s}$ is the T-identity tensor. 
 \end{definition}
 
 \begin{definition}
     The $\mathcal{H}$-infinity norm of the transfer function $\mathscr{G}(z)$ of the input-output TPDS \eqref{eq:tpds} is defined as
     \begin{equation}
         \|\mathscr{G}(z)\|_{\infty} = \|\xi(\mathscr{G}(z))\|_{\infty},
     \end{equation}
which corresponds to the maximum singular value of the unfolded matrix $\xi(\mathscr{G}(e^{j\omega}))$ for $\omega\in(-\pi,\pi]$.
 \end{definition}
 
\begin{proposition}
    Suppose that $\mathscr{G}(z)$ and $\mathscr{G}_{\text{red}}(z)$ are the transfer functions of the original and reduced input-output TPDSs (which are stable), respectively. The error bound by truncating $k$ singular tuples in the T-SVD of $\mathscr{H}$ after T-BT is given by
    \begin{equation}
        \|\mathscr{G}(z)-\mathscr{G}_{\text{red}}(z)\|_{\infty}\leq 2\max_{i=1,2,\dots,s}\sum_{j=n-k+1}^{n}\sigma_{j}^{(i)},
    \end{equation}
where $\sigma_{j}^{(i)}$ denotes the $j$the entry the $i$th singular tuple of the Hankel tensor $\mathscr{H}$ in the Fourier domain.
\end{proposition}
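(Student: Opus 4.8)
The plan is to reduce everything to the classical balanced-truncation error bound for linear systems applied to the unfolded representation \eqref{eq:unfoldtpds}, and then to exploit the block-diagonalization afforded by the Fourier transform to obtain the sharper $\max_i$ form. The classical result states that if a stable linear system is reduced by balanced truncation, deleting Hankel singular values $\sigma_{\ell+1},\dots,\sigma_N$ (counting multiplicities), then the $\mathcal{H}_\infty$ error is bounded by $2\sum_{j>\ell}\sigma_j$. First I would apply this directly to the unfolded system \eqref{eq:unfoldtpds}: by the definition of the $\mathcal{H}_\infty$ norm for TPDSs, $\|\mathscr{G}(z)-\mathscr{G}_{\text{red}}(z)\|_\infty = \|\xi(\mathscr{G}(z))-\xi(\mathscr{G}_{\text{red}}(z))\|_\infty$, and I need to argue that $\xi(\mathscr{G}_{\text{red}}(z))$ is exactly the transfer function of the balanced-truncation reduction of $\xi(\mathscr{G}(z))$. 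This requires checking that the T-BT construction (Steps 2--7 of Algorithm \ref{alg:tbt}), when passed through $\xi(\cdot)$, coincides with ordinary BT on the unfolded matrices: $\xi(\mathscr{W}^{\text{c}})=\textbf{W}^{\text{c}}$ and $\xi(\mathscr{W}^{\text{o}})=\textbf{W}^{\text{o}}$ by Proposition \ref{prop:lyapunov}, $\xi$ turns $\circledast$ into matrix multiplication and preserves transposes, and the truncated T-SVD of $\mathscr{H}$ corresponds (via the Fourier block-diagonalization) to a particular truncated SVD of $\xi(\mathscr{H})=\textbf{H}$.

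The subtle point — and what I expect to be the main obstacle — is that truncating $k$ singular \emph{tuples} of $\mathscr{H}$ is \emph{not} the same as truncating the $k$ smallest singular \emph{values} of $\xi(\mathscr{H})$. In the Fourier domain, $\mathscr{H}$ block-diagonalizes into $\textbf{H}_1,\dots,\textbf{H}_s$, and the $i$th singular tuple of $\mathscr{H}$ collects, across the $s$ Fourier blocks, the singular values $\sigma_i^{(1)},\dots,\sigma_i^{(s)}$. Truncating the last $k$ tuples therefore removes, from \emph{each} block $\textbf{H}_i$, its $k$ smallest singular values $\sigma_{n-k+1}^{(i)},\dots,\sigma_n^{(i)}$. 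Thus the reduced unfolded system is a \emph{block-diagonal} reduction: block $i$ is reduced by classical BT from order $n$ to order $n-k$. Since $\xi(\mathscr{G}(e^{j\omega}))$ is unitarily equivalent (via $\textbf{F}_n\otimes\textbf{I}_s$) to $\operatorname{blkdiag}(\textbf{G}_1(e^{j\omega}),\dots,\textbf{G}_s(e^{j\omega}))$ where $\textbf{G}_i$ is the transfer function of the $i$th Fourier block, and the $\mathcal{H}_\infty$ norm of a block-diagonal transfer function is the max of the blockwise $\mathcal{H}_\infty$ norms, I get
\begin{equation*}
\|\mathscr{G}(z)-\mathscr{G}_{\text{red}}(z)\|_\infty = \max_{i=1,\dots,s}\|\textbf{G}_i(z)-\textbf{G}_{i,\text{red}}(z)\|_\infty.
\end{equation*}

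Finally, I would apply the classical scalar BT error bound to each Fourier block separately: $\|\textbf{G}_i(z)-\textbf{G}_{i,\text{red}}(z)\|_\infty \le 2\sum_{j=n-k+1}^n \sigma_j^{(i)}$, where $\sigma_j^{(i)}$ are the Hankel singular values of the $i$th block — which are precisely the entries of the singular tuples of $\mathscr{H}$ in the Fourier domain, since $\textbf{H}_i = \xi(\mathscr{Z}_{\text{o}}^\top\circledast\mathscr{Z}_{\text{c}})$ block $i$ equals $\textbf{Z}_{\text{o},i}^\top\textbf{Z}_{\text{c},i}$, the product of the Gramian factors of block $i$. Taking the maximum over $i$ yields the claimed bound. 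The remaining care points are: confirming that stability of $\mathscr{G}$ and $\mathscr{G}_{\text{red}}$ transfers to stability of each Fourier block (so the classical bound applies blockwise), and handling the mild technicality that the classical bound as usually stated assumes one truncates a \emph{trailing} segment of the ordered Hankel singular values — here each block truncates its own smallest $k$, which is exactly a trailing segment for that block, so the hypothesis is met block by block even though globally the truncated values need not be the $ks$ smallest singular values of $\xi(\mathscr{H})$.
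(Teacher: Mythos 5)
Your proposal is correct and follows essentially the same route as the paper's proof: both reduce the claim to the Fourier-domain block-diagonalization, apply the classical balanced-truncation error bound to each of the $s$ blocks (each of which undergoes ordinary BT truncating its own $k$ smallest Hankel singular values), and use that the $\mathcal{H}_\infty$ norm of a block-diagonal transfer function equals the maximum of the blockwise norms. The only difference is organizational: the paper first constructs the tensor-domain balanced realization and the explicit error-system transfer function before passing to the Fourier domain, whereas you argue directly that T-BT commutes with blockwise classical BT, correctly flagging the tuple-versus-value truncation subtlety as the crux.
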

\begin{proof}
    Suppose that $p=q=n$ and  the complete T-SVD of $\mathscr{H}$ is provided with $\mathscr{H}=\tilde{\mathscr{U}}\circledast\tilde{\mathscr{S}}\circledast\tilde{\mathscr{V}}^\top$ where $\tilde{\mathscr{U}}\in\mathbb{R}^{n\times n\times s}$, $\tilde{\mathscr{S}}\in\mathbb{R}^{n\times n\times s}$, and $\tilde{\mathscr{V}}\in\mathbb{R}^{n\times n\times s}$. Let $\mathscr{T}=\mathscr{Z}_{\text{c}}\circledast\tilde{\mathscr{V}}\circledast\tilde{\mathscr{S}}^{-\frac{1}{2}}$. It can be shown that $\mathscr{T}^{-1}=\mathscr{Z}_{\text{o}}\circledast\tilde{\mathscr{U}}\circledast\tilde{\mathscr{S}}^{-\frac{1}{2}}$. The resulting balanced system (without truncation) parameters can be computed as
    \begin{align*}
        \tilde{\mathscr{A}}&=\mathscr{T}^{-1}\circledast\mathscr{A}\circledast\mathscr{T}\in\mathbb{R}^{n\times n\times s},\\
        \tilde{\mathscr{B}}&=\mathscr{T}^{-1}\circledast\mathscr{B}\in\mathbb{R}^{n\times m\times s},\\
        \tilde{\mathscr{C}}&=\mathscr{C}\circledast\mathscr{T}\in\mathbb{R}^{l\times n\times s},
    \end{align*}
and the associated controllability and observability Gramians are given by $\tilde{\mathscr{W}}^{\text{c}}=\tilde{\mathscr{W}}^{\text{o}}=\tilde{\mathscr{S}}\in\mathbb{R}^{n\times n\times s}$. Clearly, $\mathscr{G}(z)$ is also the transfer function for the balanced system, i.e., 
\begin{equation*}
    \mathscr{G}(z) = \tilde{\mathscr{C}}\circledast(z\mathscr{I}-\tilde{\mathscr{A}})^{-1}\circledast\tilde{\mathscr{B}} =\mathscr{C}\circledast(z\mathscr{I}-\mathscr{A})^{-1}\circledast\mathscr{B}.
\end{equation*}
Partition the F-diagonal Gramian tensor $\tilde{\mathscr{S}}$ as 
\begin{equation*}
    \tilde{\mathscr{S}} = \begin{bmatrix}
        \tilde{\mathscr{S}}_{1} & \mathscr{O}\\
        \mathscr{O} & \tilde{\mathscr{S}}_{2}
    \end{bmatrix},
\end{equation*}
where $\tilde{\mathscr{S}}_1\in\mathbb{R}^{(n-k)\times (n-k)\times s}$ and $\mathscr{O}$ denotes the zero tensor. Similarly, partition the balanced system parameters as
\begin{equation*}
    \tilde{\mathscr{A}} = \begin{bmatrix}
        \tilde{\mathscr{A}}_{11} & \tilde{\mathscr{A}}_{12}\\
        \tilde{\mathscr{A}}_{21} & \tilde{\mathscr{A}}_{22}
    \end{bmatrix}, \text{ }
    \tilde{\mathscr{B}} = \begin{bmatrix}
        \tilde{\mathscr{B}}_{1}\\
        \tilde{\mathscr{B}}_2
    \end{bmatrix}, \text{ }
    \tilde{\mathscr{C}} = \begin{bmatrix}
        \tilde{\mathscr{C}}_1 & \tilde{\mathscr{C}}_2
    \end{bmatrix},
\end{equation*}
where $\tilde{\mathscr{A}}_{11}\in\mathbb{R}^{(n-k)\times (n-k)\times s}$, $\tilde{\mathscr{B}}_1\in\mathbb{R}^{(n-k)\times m\times s}$, and $\tilde{\mathscr{C}}_1\in\mathbb{R}^{l\times (n-k)\times s}$. The difference between the transfer functions of the full and reduced systems can be computed as 
\begin{align*}
    \mathscr{G}(z) - \mathscr{G}_{\text{red}}(z) & = \begin{bmatrix}
        \tilde{\mathscr{C}}_1 & \tilde{\mathscr{C}}_2
    \end{bmatrix}
    \circledast \begin{bmatrix}
        z\mathscr{I}-\tilde{\mathscr{A}}_{11} & -\tilde{\mathscr{A}}_{12}\\
        -\tilde{\mathscr{A}}_{21} & z\mathscr{I}-\tilde{\mathscr{A}}_{22}
    \end{bmatrix}\\ &\circledast\begin{bmatrix}
        \tilde{\mathscr{B}}_1\\
        \tilde{\mathscr{B}}_2
    \end{bmatrix}-\tilde{\mathscr{C}}_1\circledast(z\mathscr{I}-\tilde{\mathscr{A}}_{11})^{-1}\circledast\tilde{\mathscr{B}}_1\\
    & = \hat{\mathscr{C}}(z)\circledast(z\mathscr{I}-\hat{\mathscr{A}}(z))^{-1}\circledast\hat{\mathscr{B}}(z),
\end{align*}
where 
\begin{align*}
    \hat{\mathscr{A}}(z) & = \tilde{\mathscr{A}}_{22}+\tilde{\mathscr{A}}_{21}\circledast(z\mathscr{I}-\tilde{\mathscr{A}}_{11})^{-1}\circledast\tilde{\mathscr{A}}_{12}\in\mathbb{R}^{k\times k\times s},\\
    \hat{\mathscr{B}}(z) &= \tilde{\mathscr{B}}_2+\tilde{\mathscr{A}}_{21}\circledast(z\mathscr{I}-\tilde{\mathscr{A}}_{11})^{-1}\circledast\tilde{\mathscr{B}}_1\in\mathbb{R}^{k\times m\times s},\\
    \hat{\mathscr{C}}(z)&=\tilde{\mathscr{C}}_1\circledast(z\mathscr{I}-\tilde{\mathscr{A}}_{11})^{-1}\circledast\tilde{\mathscr{A}}_{12} + \tilde{\mathscr{C}}_2\in\mathbb{R}^{l\times k\times s}.
\end{align*}
It can be shown that $\tilde{\mathscr{S}}_2$ satisfies the T-Lyapunov equations associated with $\hat{\mathscr{A}}(z)$, $\hat{\mathscr{B}}(z)$, and $\hat{\mathscr{C}}(z)$. By decomposing the tensor problem into $s$ smaller matrix problems in the Fourier domain, it follows that
\begin{align*}
   &\mathcal{F}\{\xi(\hat{\mathscr{C}}(z)\circledast(z\mathscr{I}-\hat{\mathscr{A}}(z))^{-1}\circledast\hat{\mathscr{B}}(z))\}\\
   & = \mathcal{F}\{\xi(\hat{\mathscr{C}}(z))\}\mathcal{F}\{\xi((z\mathscr{I}-\hat{\mathscr{A}}(z))^{-1})\}\mathcal{F}\{\xi(\hat{\mathscr{B}}(z))\}\\
    &=\begin{bmatrix}
        \hat{\textbf{C}}_1(z\textbf{I}-\hat{\textbf{A}}_1)^{-1}\hat{\textbf{B}}_1  & \cdots & \textbf{0}\\
        \vdots & \ddots & \vdots \\
        \textbf{0}  & \cdots & \hat{\textbf{C}}_s(z\textbf{I}-\hat{\textbf{A}}_s)^{-1}\hat{\textbf{B}}_s
     \end{bmatrix},
\end{align*}
where $\hat{\textbf{A}}_j(z)$, $\hat{\textbf{B}}_j(z)$, $\hat{\textbf{C}}_j(z)$ are the block diagonal matrices of $\hat{\mathscr{A}}(z)$, $\hat{\mathscr{B}}(z)$, and $\hat{\mathscr{C}}(z)$ in the Fourier domain, respectively, for $j=1,2,\dots,s$. Therefore, by applying the BT error bound from \cite{al1987error} to each diagonal entry, and using the fact that the $\mathcal{H}$-infinity norm of a block-diagonal transfer function matrix equals the maximum $\mathcal{H}$-infinity norm of its individual blocks, it follows that
\begin{align*}
    &\|\mathscr{G}(z) - \mathscr{G}_{\text{red}}(z)\|_{\infty} =  \|\mathcal{F}\{\xi(\mathscr{G}(z) - \mathscr{G}_{\text{red}}(z))\}\|_{\infty}\\
    &=\max_{i=1,2,\dots,s} \|\hat{\textbf{C}}_i(z)(z\textbf{I}-\hat{\textbf{A}}_i(z))^{-1}\hat{\textbf{B}}_i(z)\|_{\infty}\\
    &\leq 2 \max_{i=1,2,\dots,s} \sum_{j=n-k+1}^n \sigma_j^{(i)},
\end{align*}
and the proof is completed. 
\end{proof}

In summary, T-BT  exploits the intrinsic tensor structure of input-output TPDSs, thereby avoiding computationally expensive large-scale matrix operations. It reformulates the problem in the tensor domain using advanced T-SVD to enable efficient low tensor rank approximations, which is fundamentally different from BT  as T-SVD is not equivalent to SVD under the block circulant operator.  Furthermore, T-BT  provides an explicit error bound on the approximation, which ensures that the reduced model accurately preserves the key dynamic characteristics of the original system. Its ability to efficiently handle input-output TPDSs makes T-BT particularly well-suited for capture the dynamics of image and video data. Next, we introduce two variants of T-BT that utilize data.

\subsection{T-Balanced Proper Orthogonal Decomposition}
Rather than solving the T-Lyapunov equations or the corresponding matrix Lyapunov equations in the Fourier domain, one may directly approximate the controllability and observability Gramians from data or numerical simulations, referred to as empirical Gramians. This approach is particularly advantageous for large-scale input-output TPDSs, where solving the T-Lyapunov equations can be computationally prohibitive.  To obtain the empirical controllability Gramian, impulse response simulations of the input TPDS  must be conducted. Suppose that the control tensor $\mathscr{B}$ can be rewritten as $\mathscr{B}=[\mathscr{B}_1 \text{ }\mathscr{B}_2 \text{ }\cdots\text{ }\mathscr{B}_m]$. The state responses for the input TPDS  over $t= 0,1,\dots,T$ can be constructed as
\begin{equation*}
    \mathscr{X}^{(j)} = \begin{bmatrix}
        \mathscr{B}_j & \mathscr{A}\circledast\mathscr{B}_j & \cdots & \mathscr{A}^T\circledast\mathscr{B}_j
    \end{bmatrix}
\end{equation*}
for $j=1,2,\dots,m$. These snapshots are then arranged as
\begin{equation*}
    \mathscr{X} = \begin{bmatrix}
        \mathscr{X}^{(1)} &  \mathscr{X}^{(2)} & \cdots &  \mathscr{X}^{(m)}
    \end{bmatrix}\in\mathbb{R}^{n\times m(T+1)\times s},
\end{equation*}
and the empirical controllability Gramian is estimated as $\mathscr{W}^\text{c} = \mathscr{X}\circledast\mathscr{X}^\top\in\mathbb{R}^{n\times n\times s}$. Note that this controllability Gramian is consistent with the definition of BPOD after unfolding.  

\begin{lemma}\label{lm:5}
    The block circulant matrix $\xi(\mathscr{W}^\text{c})$ represents the empirical controllability Gramian of the input linear system with the state transition matrix $\xi(\mathscr{A})$ and control matrix $\xi(\mathscr{B})$. 
\end{lemma}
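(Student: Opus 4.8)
The plan is to reduce this tensor identity to the standard matrix identity for the empirical controllability Gramian of a linear system, by pushing the block circulant operator $\xi(\cdot)$ through the T-product expression defining $\mathscr{W}^{\text{c}}$. The two structural facts I will use are that $\xi$ is linear and multiplicative, $\xi(\mathscr{P}\circledast\mathscr{Q})=\xi(\mathscr{P})\xi(\mathscr{Q})$, and that it commutes with the T-transpose, $\xi(\mathscr{P}^\top)=\xi(\mathscr{P})^\top$; both follow directly from the definitions of $\xi$, $\circledast$, and the T-transpose (this is the sense in which the preliminaries note that all operations can be verified through $\xi$). Iterating the first gives $\xi(\mathscr{A}^k)=\xi(\mathscr{A})^k$, and combining the two gives $\big(\xi(\mathscr{A})^\top\big)^k=\xi\big((\mathscr{A}^\top)^k\big)$.

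First I would put the empirical controllability Gramian in closed form. Using the block-tensor arithmetic of the T-product -- concatenation along the second mode behaves like block-matrix concatenation, and the T-transpose of a row block tensor is the corresponding column block tensor -- one has
\begin{equation*}
\mathscr{W}^{\text{c}}=\mathscr{X}\circledast\mathscr{X}^\top=\sum_{j=1}^m \mathscr{X}^{(j)}\circledast\big(\mathscr{X}^{(j)}\big)^\top=\sum_{j=1}^m\sum_{k=0}^T \mathscr{A}^k\circledast\mathscr{B}_j\circledast\mathscr{B}_j^\top\circledast(\mathscr{A}^\top)^k,
\end{equation*}
where I also used $(\mathscr{A}^k\circledast\mathscr{B}_j)^\top=\mathscr{B}_j^\top\circledast(\mathscr{A}^\top)^k$. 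Swapping the order of summation and using $\sum_{j=1}^m\mathscr{B}_j\circledast\mathscr{B}_j^\top=\mathscr{B}\circledast\mathscr{B}^\top$ (again the block-tensor rule), this becomes $\mathscr{W}^{\text{c}}=\sum_{k=0}^T \mathscr{A}^k\circledast\mathscr{B}\circledast\mathscr{B}^\top\circledast(\mathscr{A}^\top)^k$, i.e., the finite-horizon truncation of the controllability Gramian introduced earlier.

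Next I would apply $\xi$ term by term. By linearity, multiplicativity, transpose-compatibility, and $\xi(\mathscr{A}^k)=\xi(\mathscr{A})^k$, I obtain
\begin{equation*}
\xi(\mathscr{W}^{\text{c}})=\sum_{k=0}^T \xi(\mathscr{A})^k\,\xi(\mathscr{B})\,\xi(\mathscr{B})^\top\,\big(\xi(\mathscr{A})^\top\big)^k.
\end{equation*}
Finally I would recognize the right-hand side: repeating the very same block-matrix expansion that underlies the BPOD construction in the preliminaries, the empirical controllability Gramian $\textbf{X}\textbf{X}^\top$ of the linear system with state transition matrix $\xi(\mathscr{A})$ and control matrix $\xi(\mathscr{B})$, built from impulse responses over $t=0,1,\dots,T$, is exactly $\sum_{k=0}^T \xi(\mathscr{A})^k\xi(\mathscr{B})\xi(\mathscr{B})^\top(\xi(\mathscr{A})^\top)^k$. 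Matching the two displays completes the argument.

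The only delicate point is the block-tensor arithmetic invoked in the first step, namely that the row block tensor $\mathscr{X}=[\mathscr{X}^{(1)}\ \cdots\ \mathscr{X}^{(m)}]$ satisfies $\mathscr{X}\circledast\mathscr{X}^\top=\sum_j \mathscr{X}^{(j)}\circledast(\mathscr{X}^{(j)})^\top$, and likewise for $\mathscr{B}$. Verifying these through $\xi$ in principle requires tracking a column-interleaving permutation relating $\xi\big([\mathscr{M}_1\ \cdots\ \mathscr{M}_m]\big)$ to $[\xi(\mathscr{M}_1)\ \cdots\ \xi(\mathscr{M}_m)]$; however, since the identities needed are of Gram type $\mathscr{M}\circledast\mathscr{M}^\top$, this permutation and its transpose cancel, so no delicate indexing is actually required -- equivalently, one may simply invoke the standard block-tensor multiplication rules for the T-product. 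Everything else is routine bookkeeping with $\xi$.
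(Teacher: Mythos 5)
Your proof is correct and follows essentially the same route as the paper's: both push $\xi$ through the T-product construction of $\mathscr{W}^{\text{c}}$ and reduce the claim to the matrix identity $\textbf{X}\textbf{X}^\top$ for the unfolded system, with the block-concatenation permutations rendered harmless by the Gram structure. The only difference is bookkeeping: the paper applies $\xi$ directly to the snapshot tensors and cancels explicit permutation matrices (writing $\xi(\mathscr{X})=\textbf{X}\textbf{P}_1\textbf{P}_2$ so that $\xi(\mathscr{X})\xi(\mathscr{X})^\top=\textbf{X}\textbf{X}^\top$), whereas you expand both Gramians into the finite-horizon sum $\sum_{k=0}^{T}\xi(\mathscr{A})^k\xi(\mathscr{B})\xi(\mathscr{B})^\top\big(\xi(\mathscr{A})^\top\big)^k$ and match the closed forms.
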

\begin{proof}
    Based on the properties of the block circulant operator, it can be shown that
    \begin{align*}
    \xi(\mathscr{X}^{(j)}) &= \begin{bmatrix}
        \xi(\mathscr{B}_j) & \xi(\mathscr{A})\xi(\mathscr{B}_j) & \cdots & \xi(\mathscr{A})^T\xi(\mathscr{B}_j)
    \end{bmatrix}\textbf{P}_1,\\
    \xi(\mathscr{X}) &= \begin{bmatrix}
        \xi(\mathscr{X}^{(1)}) &  \xi(\mathscr{X}^{(2)}) & \cdots &  \xi(\mathscr{X}^{(m)})
    \end{bmatrix}\textbf{P}=\textbf{X}\textbf{P}_1\textbf{P}_2,\\
        \xi(\mathscr{W}^\text{c}) &=\xi(\mathscr{X})\xi(\mathscr{X})^\top=\textbf{X}\textbf{P}_1\textbf{P}_2\textbf{P}_2^\top\textbf{P}_1^\top\textbf{X}^\top=\textbf{X}\textbf{X}^\top,
    \end{align*}
where $\textbf{P}_1$ and $\textbf{P}_2$ are appropriate column permutation matrices. Therefore, the result follows immediately. 
\end{proof}

\begin{algorithm}[t]
\caption{T-Balanced Proper Orthogonal Decomposition}
\label{alg:tbpod}
\begin{algorithmic}[1]
\STATE  Given the snapshots of state impulse responses $\mathscr{X}_t$ and $\mathscr{Y}_t$, integers $T$ and $L$, and the input-output TPDS \eqref{eq:tpds} with $\mathscr{A}$, $\mathscr{B}$, and $\mathscr{C}$ \\
\STATE   Construct the input snapshot tensor  
 \begin{align*}
   \mathscr{X}= \begin{bmatrix} 
   \mathscr{X}^{(1)} & \mathscr{X}^{(2)}  &\cdots &\mathscr{X}^{(m)} \end{bmatrix},
\end{align*} \\
where $\mathscr{X}^{(j)} = [\mathscr{X}_0\text{ } \mathscr{X}_1 \text{ }\cdots\text{ }\mathscr{X}_T]$ for $j=1,2,\dots,m$
\STATE Construct the output snapshot tensor $\mathscr{Y}$ using
 the adjoint state responses $\mathscr{Y}_t$ similarly\\
\STATE Compute the generalized Hankel tensor $\mathscr{H}=\mathscr{Y}^\top \circledast\mathscr{X}$\\
\STATE Compute the truncated T-SVD of $\mathscr{H}$, i.e., $\mathscr{H}\approx \mathscr{U}\circledast\mathscr{S}\circledast\mathscr{V}^\top$, by truncating the singular tuples of $\mathscr{H}$ based on their Frobenius norm\\
\STATE Calculate the transformation tensors  $\mathscr{P} = \mathscr{X}\circledast\mathscr{V}\circledast\mathscr{S}^{-\frac{1}{2}}$ and $\mathscr{Q}=\mathscr{Y}\circledast\mathscr{U}\circledast\mathscr{S}^{-\frac{1}{2}}$\\
\STATE  The reduced input-output TPDS with $\mathscr{A}_{\text{red}} = \mathscr{Q}^\top\circledast\mathscr{A}\circledast\mathscr{P}$,
 $\mathscr{B}_{\text{red}}=\mathscr{Q}\circledast\mathscr{B}$, and $\mathscr{C}_{\text{red}}=\mathscr{C}\circledast\mathscr{P}$
\end{algorithmic}
\end{algorithm}

The procedure proceeds similarly for constructing the output snapshot tensor $\mathscr{Y}\in\mathbb{R}^{n\times l(L+1)\times s}$ from the simulations of the adjoint TPDS defined as
\begin{equation}
    \mathscr{Y}(t+1) = \mathscr{A}^\top\circledast\mathscr{Y}(t) + \mathscr{C}^\top\circledast\mathscr{V}(t),
\end{equation}
over $t= 0,1,\dots,L$. The empirical observability Gramian can be approximated by $\mathscr{W}^{\text{o}}=\mathscr{Y}\circledast\mathscr{Y}^\top\in\mathbb{R}^{n\times n\times s}$.

The goal of T-BPOD is to obtain an approximate T-BT that remains computationally tractable for large-scale input-output TPDSs by deriving the balancing transformation from a collection of input and output snapshots. After obtaining the input and output snapshots $\mathscr{X}$ and $\mathscr{Y}$, the generalized Hankel tensor is then defined as $\mathscr{H} = \mathscr{Y}^\top \circledast \mathscr{X} \in \mathbb{R}^{l(L+1) \times m(T+1) \times s}$. The subsequent steps of T-BPOD, including computing the truncated T-SVD of $\mathscr{H}$, constructing the transformation tensors $\mathscr{P}$ and $\mathscr{Q}$, and obtaining the reduced input-output TPDS with parameter tensors \eqref{eq:redtpds}, follow a similar procedure as in T-BT. The detailed steps of T-BPOD are outlined in Algorithm \ref{alg:tbpod}, where most steps can be efficiently implemented in the Fourier domain, and the final reduced model is obtained by applying the inverse Fourier transform.

T-BPOD circumvents the need for solving the T-Lyapunov equations through numerical simulations, which are often more computationally efficient than T-BT, especially for large-scale systems, while maintaining similar memory usage. Moreover, T-BPOD provides significant memory and computational advantages over standard BPOD by leveraging the low tensor rank approximation enabled by T-SVD.

\begin{remark}
    Suppose that we truncate $k$ singular tuples in the T-SVD of $\mathscr{H}$. The total number of parameters in the reduced input-output TPDS after T-BPOD is $r^2s+rms+rls$ where $r=\min{\{l(L+1)-k, m(T+1)-k\}}$. Under the same level of truncation (i.e., truncating $k$ singular values), the total number of parameters
in the reduced system using BPOD on the unfolded form \eqref{eq:unfoldtpds} is $\hat{r}^2+\hat{r}m+\hat{r}l$ where $\hat{r} = \min{\{l(L+1)s-k,m(T+1)s-k\}}$. Similar to T-TB, the reduced linear system cannot be transformed back to the form of TPDSs. 
\end{remark}

\begin{remark}
Suppose that $mT<lL$. The computational complexity of T-BPOD can be decomposed into several components: (i) computing the Fourier transform of $\mathscr{X}$ and $\mathscr{Y}$ requires $\mathcal{O}(nlLs\log{s})$; (ii) building the generalized Hankel tensor $\mathscr{H}$, computing its T-SVD, and constructing the transformation tensor $\mathscr{P}$ and $\mathscr{Q}$ take $\mathcal{O}(nmTlLs + m^2T^2lLs)$; (iii)  computing the reduced input-output TPDS (with the inverse Fourier transform) involves $\mathcal{O}(n^2mTs+nm^2Ts+nmTls+m^2T^2s\log{s} + mTls\log{s})$. Thus, the total time complexity of T-BPOD is
\begin{align*}
    &\mathcal{O}((nlL+m^2T^2 + mTl)s\log{s}+nmTlLs\\ &+m^2T^2lLs+n^2mTs+nm^2T^2s).
\end{align*}
However, applying standard BPOD to the unfolded representation \eqref{eq:unfoldtpds} incurs a significantly higher computational cost with a time complexity
\begin{equation*}
    \mathcal{O}(nmTlLs^3+m^2T^2lLs^3+n^2mTs^3+nm^2Ts^3).
\end{equation*}
\end{remark}

\subsection{T-Eigensystem Realization Algorithm}
T-BPOD requires access to both the input-output TPDS \eqref{eq:tpds} and its adjoint, making it unsuitable for experimental settings. ERA-based approaches can overcome this limitation while delivering comparable performance, further reducing computational costs. The core idea of  T-ERA is to construct a generalized Hankel
tensor using impulse response simulations or experiments without having access to the input-output TPDS \eqref{eq:tpds}. Similar to  ERA,  we need to collect the snapshots of  impulse
responses from simulations or experiments and form the generalized Hankel tensor defined as
\begin{equation*}
\mathscr{H}=
\begin{bmatrix}
    \mathscr{Z}_0 & \mathscr{Z}_1 & \cdots & \mathscr{Z}_T\\
    \mathscr{Z}_1 & \mathscr{Z}_2 & \cdots & \mathscr{Z}_{T+1}\\
    \vdots & \vdots & \ddots & \vdots\\
    \mathscr{Z}_L & \mathscr{Z}_{L+1} & \cdots & \mathscr{Z}_{L+T}\\
\end{bmatrix}\in\mathbb{R}^{l(L+1)\times m(T+1)\times s},
\end{equation*}
where $\mathscr{Z}_j=\mathscr{C}\circledast\mathscr{A}^j\circledast\mathscr{B}\in\mathbb{R}^{l\times m\times s}$ are referred to as the Markov parameters, and $T+L+2$ is the number of snapshots. Note that this generalized Hankel tensor is consistent with the definition for  ERA after unfolding under appropriate row and column permutations. 


\begin{table*}[t]
\centering
\caption{Computational time, memory usage, and relative error comparisons between T-BT and BT for model order reduction of input-output TPDSs at various singular tuple/value truncation levels.}
\renewcommand{\arraystretch}{1.2}
\setlength{\tabcolsep}{12pt}
\small
\begin{tabular}{c  c c  c c  c c}
\hline
\multirow{1}{*}{Truncation Level \( k \)} & \multicolumn{2}{c}{{Time (in seconds)}} & \multicolumn{2}{c}{{Memory Usage (in MB)}} & \multicolumn{2}{c}{{Relative Error}} \\  
& T-BT & BT & T-BT & BT & T-BT & BT \\  
\hline
55 & 0.4210  & 1.4041 & 0.1737   & 1.3257  & \(1.2 \times 10^{-14}\)  & \(1.2 \times 10^{-14}\)  \\  
60 & 0.4166  & 1.3528  & 0.1431   & 1.0503  & \(1.09 \times 10^{-13}\)  & \(9.9 \times 10^{-14}\)  \\  
65 & 0.3833  & 1.3499  & 0.1161   & 0.8073  & \(6.38 \times 10^{-12}\) & \(6.37 \times 10^{-12}\) \\  
70 & 0.4068  & 1.3057  & 0.0927   & 0.5967  & \(3.74 \times 10^{-10}\)  & \(3.74 \times 10^{-10}\)  \\  
75 & 0.3735  & 1.4522 & 0.0729  & 0.4185  & \(2.81 \times 10^{-8}\)  & \(2.81 \times 10^{-8}\)  \\  
80 & 0.3837  & 1.4601  & 0.0567   & 0.2727  & \(1.05 \times 10^{-6}\)  & \(1.05 \times 10^{-6}\)  \\  
85 & 0.3675  & 1.3402  & 0.0441   & 0.1593  & \(7.06 \times 10^{-5}\)  & \(6.26 \times 10^{-5}\)  \\  
90 & 0.3744  & 1.3217  & 0.0351   & 0.0783  & \(2.19 \times 10^{-3}\)  & \(2.19 \times 10^{-3}\)  \\  
\hline
\end{tabular}
\label{tab:tbt}
\end{table*}

After obtaining the generalized Hankel tensor $\mathscr{H}$, we can compute its truncated T-SVD, expressed as $\mathscr{H} \approx \mathscr{U} \circledast \mathscr{S} \circledast \mathscr{V}^\top$, where $\mathscr{U} \in \mathbb{R}^{l(L+1) \times r \times s}$, $\mathscr{S} \in \mathbb{R}^{r \times r \times s}$, and $\mathscr{V} \in \mathbb{R}^{m(T+1) \times r \times s}$. Finally, the reduced input-output TPDS parameter tensors can be computed as
\begin{equation}
    \begin{split}
        \mathscr{A}_{\text{red}} &= \mathscr{S}^{-\frac{1}{2}}\circledast \mathscr{U}^\top\circledast\hat{\mathscr{H}}\circledast \mathscr{V}\circledast\mathscr{S}^{-\frac{1}{2}}\in\mathbb{R}^{r\times r\times s},\\
        \mathscr{B}_{\text{red}} &= \mathscr{S}^{-\frac{1}{2}}\circledast \mathscr{U}^\top \circledast \mathscr{H}_{\text{col}}\in\mathbb{R}^{r\times m\times s},\\
        \mathscr{C}_{\text{red}} &= \mathscr{H}_{\text{row}} \circledast\mathscr{V}\circledast\mathscr{S}^{-\frac{1}{2}}\in\mathbb{R}^{l\times r\times s},
    \end{split}
\end{equation}
where $\mathscr{H}_{\text{col}}$ and $\mathscr{H}_{\text{row}}$ denote the first block column and row of $\mathscr{H}$, respectively, and 
\begin{equation*}
   \hat{\mathscr{H}}  = \begin{bmatrix} \mathscr{Z}_1 & \mathscr{Z}_2 &\cdots &\mathscr{Z}_{T+1} \\ \mathscr{Z}_2 & \mathscr{Z}_3 & \cdots &\mathscr{Z}_{T+2} \\ \vdots &\vdots &\ddots &\vdots\\ \mathscr{Z}_{L+1}&\mathscr{Z}_{L+2}&\cdots &\mathscr{Z}_{T+L+1}  \end{bmatrix}\in\mathbb{R}^{l(L+1)\times m(T+1)\times s}.
\end{equation*}

The detailed steps of  T-ERA are summarized in Algorithm \ref{alg:tera}.  T-ERA does not require prior knowledge of the original input-output TPDS \eqref{eq:tpds}, further accelerating the identification of the reduced system compared to T-BPOD. Moreover, it
provides remarkable memory and computational advantages over standard ERA.

\begin{algorithm}[t]
\caption{T-Eigensystem Realization Algorithm}
\label{alg:tera}
\begin{algorithmic}[1]
\STATE  Given the snapshots of  impulse responses $\mathscr{Z}_j\in \mathbb{R}^{l\times m \times s}$, and integers $T$  and $L$ \\
\STATE  Construct the generalized Hankel tensors $\mathscr{H}$ and $\hat{\mathscr{H}}$ using the snapshots of  impulse responses $\mathscr{Z}_j$ for $j=0,1,\dots,T+L+1$\\
\STATE Compute the truncated T-SVD of $\mathscr{H}$, i.e., $\mathscr{H} \approx \mathscr{U} \circledast \mathscr{S} \circledast \mathscr{V}^\top$, by truncating the singular tuples of  $\mathscr{H}$ based on their Frobenius norm\\
\STATE The reduced input-output TPDS is then computed as
\begin{align*}
\mathscr{A}_{\text{red}} &= \mathscr{S}^{-\frac{1}{2}}\circledast \mathscr{U}^\top\circledast\hat{\mathscr{H}}\circledast \mathscr{V}\circledast\mathscr{S}^{-\frac{1}{2}},\\
        \mathscr{B}_{\text{red}} &= \mathscr{S}^{-\frac{1}{2}}\circledast \mathscr{U}^\top \circledast \mathscr{H}_{\text{col}},\\
        \mathscr{C}_{\text{red}} &= \mathscr{H}_{\text{row}} \circledast\mathscr{V}\circledast\mathscr{S}^{-\frac{1}{2}},
\end{align*}
where $\mathscr{H}_{\text{col}}$ and $\mathscr{H}_{\text{row}}$ denote the first block column and row of $\mathscr{H}$, respectively. 
\end{algorithmic}
\end{algorithm}

\begin{remark}
Suppose we truncate $k$ singular tuples in the T-SVD of $\mathscr{H}$. The total number of parameters in the reduced input-output TPDS after T-ERA remains the same as in T-BPOD. More importantly,  T-ERA can directly identify the reduced input-output TPDS, whereas traditional ERA can only recover a linear system with more parameters.
\end{remark}

\begin{remark}
Suppose that $mT<lL$. The computational complexity of  T-ERA  consists of  several components: (i) calculating the Fourier transform of the generalized Hankel tensor $\mathcal{H}$ requires $\mathcal{O}(mTlLs\log{s})$; (ii) obtaining the T-SVD of the generalized Hankel tensor $\mathcal{H}$ takes $\mathcal{O}(m^2T^2lLs)$; (iii) computing the reduced input-output TPDS (with the inverse Fourier transform) involves $\mathcal{O}(m^2T^2lLs+m^2T^2s\log{s} + mTls\log{s})$. Hence, the overall computational complexity of the T-ERA is
\begin{equation*}
    \mathcal{O}(mTlLs\log{s}+m^2T^2lLs).
\end{equation*}
On the other hand,  standard ERA has a significantly higher complexity of $\mathcal{O}(m^2T^2lLs^3)$. 
\end{remark}

\section{Numerical Examples}\label{sec:num}
All numerical experiments were conducted using MATLAB R2021b on a machine equipped with an M1 Pro CPU and 16 GB of memory, utilizing the tensor-tensor-product-toolbox-master \cite{lu2018tensor}. The code used for these experiments is
available at \url{https://github.com/ZiqinHe/TPDS_MOR}.

\begin{table*}[t]
\centering
\caption{Computational time, memory usage, and relative error comparisons between T-BPOD and BPOD for model order reduction of input-output TPDSs at various singular tuple/value truncation levels.}
\renewcommand{\arraystretch}{1.2}
\setlength{\tabcolsep}{12pt}
\small
\begin{tabular}{c c c c c c c}
\hline
\multirow{1}{*}{Truncation Level \( k \)} & \multicolumn{2}{c}{Time (in seconds)} & \multicolumn{2}{c}{Memory Usage (in MB)} & \multicolumn{2}{c}{Relative Error} \\  
& T-BPOD & BPOD & T-BPOD & BPOD & T-BPOD & BPOD \\  
\hline
55  & 0.4575  & 3.7431  & 0.2066  & 1.6348  & $4.0 \times 10^{-15}$  & $6.0 \times 10^{-15}$  \\  
60  & 0.4683  & 3.6727  & 0.1724  & 1.3270  & $9.5 \times 10^{-14}$  & $9.5 \times 10^{-14}$  \\  
65  & 0.3249  & 4.4751  & 0.1418  &  1.0516  & $5.5 \times 10^{-12}$ & $5.5 \times 10^{-12}$ \\  
70  & 0.3090  & 3.6284  &  0.1148  & 0.8086  & $2.6 \times 10^{-10}$ & $2.6 \times 10^{-10}$ \\  
75  & 0.4095  & 3.7941  & 0.0914  & 0.5980   & $1.7 \times 10^{-8}$  & $1.7 \times 10^{-8}$  \\  
80  & 0.2981  & 3.7154  & 0.0716  & 0.4198   & $1.0 \times 10^{-6}$  & $1.0 \times 10^{-6}$  \\  
85  & 0.2910  & 3.6538  & 0.0554  & 0.2740   & $4.3 \times 10^{-5}$  & $4.3 \times 10^{-5}$  \\  
90  & 0.2786  & 4.3218  &  0.0428   & 0.1606   & $2.3 \times 10^{-3}$  & $2.1 \times 10^{-3}$  \\  
\hline
\end{tabular}
\label{tab:tbpod}
\end{table*}

\subsection{T-Balanced Truncation}
In this example, we evaluated the performance of T-BT for input-output TPDSs and compared it against standard BT. To set up the problem, we first randomly generated the state transition tensor $\mathscr{A}\in\mathbb{R}^{100\times 100\times 9}$, the control tensor $\mathscr{B}\in\mathbb{R}^{100\times 5\times 9}$, and the output tensor $\mathscr{C}\in\mathbb{R}^{5\times 100\times 9}$. The primary objective was to evaluate and compare T-BT and BT in terms of computational efficiency, memory usage, and relative error in constructing the reduced systems. Memory usage is measured by the total number of parameters in the reduced system, and the relative error is defined as
\begin{equation*}
    \text{Relative Error} = \frac{\|\mathcal{G}(z)-\mathcal{G}_{\text{red}}(z)\|_{\infty}}{\|\mathcal{G}(z)\|_{\infty}},
\end{equation*}
where $\mathcal{G}(z)$ and $\mathcal{G}_{\text{red}}(z)$ are the transfer functions of the original and reduced systems, respectively, to quantify the accuracy of the reduced systems from T-BT and BT. The results are presented in Table \ref{tab:tbt}, comparing the performance of T-BT and BT across different singular tuples/values truncation levels.  Notably, the computational time for T-BT remains consistently lower than that of BT across all values of $k$. Additionally, T-BT results in reduced systems that require less memory compared to those produced by  BT across all values of $k$. Despite these advantages in terms of computational time and memory usage, T-BT also achieves relative errors comparable to those produced by  BT. Therefore, T-BT emerges as a highly effective approach for model order reduction in the context of TPDSs, offering both computational and memory advantages without compromising on accuracy.

\subsection{T-Balanced Proper Orthogonal Decomposition}
In this example, we examined the effectiveness of T-BPOD for input-output TPDSs and compared it with standard BPOD. As in the previous example, we randomly generated the state transition tensor $\mathscr{A}\in\mathbb{R}^{100\times 100\times 9}$, the control tensor $\mathscr{B}\in\mathbb{R}^{100\times 5\times 9}$, and the output tensor $\mathscr{C}\in\mathbb{R}^{5\times 100\times 9}$, and manually computed the impulse responses. We then compared T-BPOD and BPOD in terms of computational time, memory usage, and relative error in constructing the reduced systems.   The results are shown in Table \ref{tab:tbpod}. Note that the computational time includes the time for computing the impulse responses.  Similar to the results obtained with T-BT, T-BPOD demonstrates lower computational time, reduced memory usage, and nearly identical relative error when compared to BPOD, underscoring the effectiveness of T-BPOD as a viable method for model order reduction in input-output TPDSs. 

\begin{table*}[t]
\centering
\caption{Computational time, memory usage, and relative error comparisons between T-ERA and ERA for model order reduction of input-output TPDSs at various singular tuple/value truncation levels.}
\renewcommand{\arraystretch}{1.2}
\setlength{\tabcolsep}{12pt}
\small
\begin{tabular}{c c c c c c c}
\hline
\multirow{1}{*}{Truncation Level \( k \)} & \multicolumn{2}{c}{Time (in seconds)} & \multicolumn{2}{c}{Memory Usage (in MB)} & \multicolumn{2}{c}{Relative Error} \\  
& T-ERA & ERA & T-ERA & ERA & T-ERA & ERA \\  
\hline
55  & 0.3819  & 1.5076  & 0.2093  & 1.6348  & $4.0 \times 10^{-15}$  & $5.0 \times 10^{-15}$  \\  

60  & 0.2433  & 1.6379  & 0.1751  & 1.3270  & $9.5 \times 10^{-14}$  & $9.5 \times 10^{-14}$  \\  

65  & 0.2777  & 1.4462  & 0.1445 & 1.0516  & $5.5 \times 10^{-12}$ & $5.5 \times 10^{-12}$ \\  

70  & 0.2504  & 1.6192  & 0.1175  & 0.8086  & $2.6 \times 10^{-10}$ & $2.6 \times 10^{-10}$ \\  

75  & 0.2459  & 1.6999  & 0.0941  & 0.5980  & $1.7 \times 10^{-8}$  & $1.7 \times 10^{-8}$  \\  

80  & 0.2323  & 1.5652  & 0.0743  & 0.4198  & $1.0 \times 10^{-6}$  & $1.0 \times 10^{-6}$  \\  

85  & 0.2433  & 1.5592  & 0.0581  & 0.2740  & $4.3 \times 10^{-5}$  & $4.3 \times 10^{-5}$  \\  

90  & 0.2292  & 1.3671  & 0.0455  & 0.1606 & $2.3 \times 10^{-3}$  & $2.1 \times 10^{-3}$  \\  


\hline
\end{tabular}
\label{tab:tera}
\end{table*}

\subsection{T-Eigensystem Realization Algorithm}
In this example, we applied T-ERA to the same input-output TPDSs previously used for evaluating T-BPOD. We manually computed the Markov parameters and constructed the generalized Hankel tensors. We then compared T-ERA and ERA across key performance metrics, including computational time (including the time for computing the Markov parameters), memory usage, and relative error in constructing the reduced systems.  The results are demonstrated in Table \ref{tab:tera}, in which    T-ERA outperforms ERA in terms of faster computational time, lower memory requirements, and nearly identical relative error.  Additionally, the computational time of T-ERA is lower than that of T-BPOD, but it results in similar memory usage and relative error. Interestingly, the final reduced models obtained from T-ERA differ slightly from those obtained from T-BPOD, as indicated by the variation in memory usage, whereas ERA and BPOD produce identical reduced models as theoretically guaranteed. This observation suggests that, unlike their matrix-based counterparts, T-BPOD and T-ERA may not be equivalent, warranting further theoretical exploration. Overall, T-ERA demonstrates strong potential as an efficient and reliable data-driven approach for model order reduction of TPDSs.


\subsection{A Case Study on Image Data}
\begin{figure}[t]
    \centering
    \includegraphics[width=\linewidth]{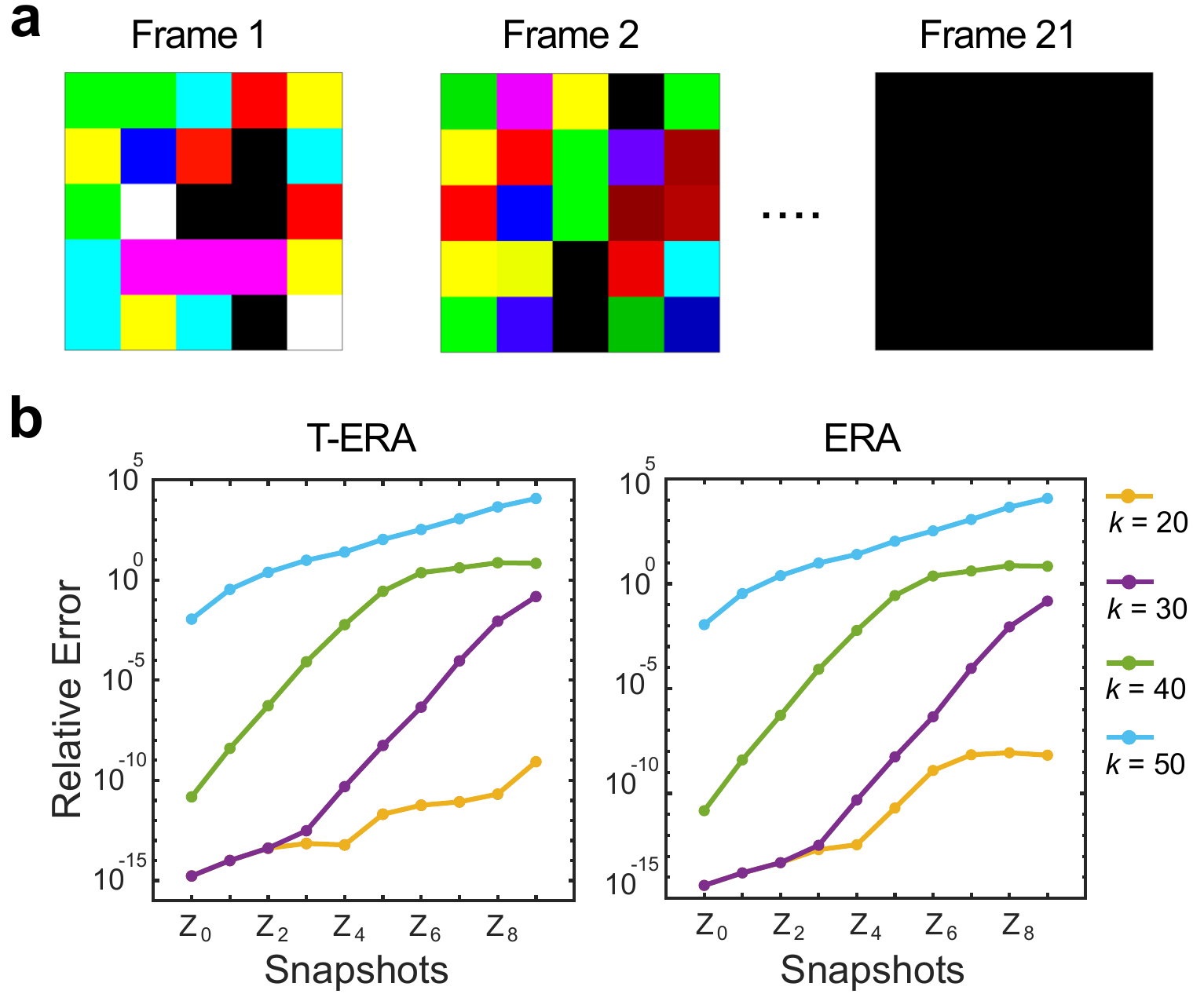}
    \caption{(a) Impulse image data containing 21 frames. (b) Relative reconstruction errors  of $\mathscr{Z}_j$ for $j=0,1,\dots,9$ using the T-ERA- and ERA-based identified systems with truncation levels $k=20,30,40,50$. The results for $k = 0$ and $k = 10$ yield similar curves to those for $k = 20$ and are omitted for clarity.}
    \label{fig:1}
\end{figure}

In this example, we applied T-ERA and ERA to a synthetic impulse image dataset (resembling color test patterns) to identify the underlying image dynamics. The dataset consists of 21 impulse images, denoted by  $\mathscr{Z}_j\in\mathbb{R}^{5\times 5\times 3}$ for $j = 0, 1, \dots, 20$ with $T=L=10$, which   were generated to progressively darken over time, ensuring that the underlying dynamics is stable, see Figure \ref{fig:1} (a). In order to apply ERA, each image frame was vectorized into a one-dimensional array, whereas T-ERA operates directly on the tensorial structure of the data. Table \ref{tab:video_data_ERA} summarizes the results, where we compared the total number of parameters in the identified models across various truncation levels of the Hankel singular tuples/values.  Notably, T-ERA exhibits a substantial memory usage advantage over ERA across all truncation levels. Additionally, the  systems recovered by T-ERA preserve a multilinear structure, taking the form of TPDSs, which naturally aligns with the tensor representation of image sequences. In contrast, ERA produces standard linear systems that flatten the underlying structure and cannot be converted back to the TPDS format. To evaluate reconstruction accuracy, we computed the relative reconstruction errors for the first 10 frames, i.e., $\mathscr{Z}_j$ for $j = 0, 1, \dots, 9$. As shown in Figure \ref{fig:1} (b), T-ERA achieves reconstruction performance comparable to that of ERA, despite using much fewer parameters. Taken together, these results demonstrate the promise of T-ERA as a powerful and memory-efficient tool for modeling the dynamics of image and video data, achieving accurate reconstructions while maintaining structural integrity and interpretability.

\begin{table}[t]
\centering
\renewcommand{\arraystretch}{1.2}
\setlength{\tabcolsep}{12pt}
\small
\caption{Total number of parameters in the identified input-output systems using T-ERA and ERA at different singular tuple/value truncation levels.}
\begin{tabular}{c c c}
\hline
\multirow{1}{*}{Truncation Level \( k \)} &  \multicolumn{2}{c}{Number of Parameters}  \\  
& T-ERA & ERA \\  
\hline
0  &  10725  & 32175    \\  

10   & 7425  & 22275   \\  

20  &  4725  & 14175    \\  

30   & 2625  & 7875   \\  

40  &  1125  & 3375    \\  

50   & 225  & 675   \\  
\hline
\end{tabular}
\label{tab:video_data_ERA}
\end{table}

\section{Conclusion}\label{sec:con}
In this article, we introduced an innovative framework for BT-based model order reduction of input-output TPDSs, where the state, input, and output variables are represented as tensors, and system evolution is described using the T-product.  Specifically, we developed advanced T-product-based methods, including T-BT, T-BPOD, and T-ERA, by leveraging the unique properties of T-SVD. These methods effectively reduce the state dimensionality of input-output TPDSs by truncating the singular tuples of their generalized Hankel tensors, while preserving the critical dynamic features of the original system. Additionally, they achieve notable improvements in computational and memory efficiency compared to traditional techniques such as BT, BPOD, and ERA.   Numerical examples highlight the effectiveness of the proposed framework, demonstrating its potential for handling large-scale tensor data while maintaining essential system dynamics.

Future work will involve applying the developed model order reduction techniques to a wide variety of real-world large-scale image and video data. The goal is to optimize the use of memory and computational time while maintaining the critical dynamic features of the data, allowing for more efficient storage and faster processing speeds. Additionally, a key direction will be incorporating advanced tensor decomposition methods, including CANDECOMP/PARAFAC decomposition \cite{kolda2009tensor}, higher-order singular value decomposition \cite{de2000multilinear}, tensor train decomposition \cite{oseledets2011tensor}, and hierarchical tucker decomposition \cite{grasedyck2010hierarchical}, to further reduce memory usage and computational time, making the proposed techniques even more scalable and efficient for large tensor datasets. Furthermore, integrating these approaches with machine learning frameworks holds great potential for enhancing automated processing tasks such as image and video compression, object detection, and motion analysis. By combining model order reduction with machine learning, we expect to improve not only computational efficiency but also the accuracy and effectiveness of data-driven solutions in real-time applications.

\section*{References}
\bibliographystyle{IEEEtran}
\bibliography{references}

\end{document}